\newtheoremstyle{slplain}
  {3pt}
  {3pt}
  {\slshape}
  {}
  {\bfseries}
  {.}%
  { }
  {}
\theoremstyle{slplain}
\newtheorem{thm}{Theorem}
\newtheorem{cor}{Corollary}
\newtheorem{lem}{Lemma}
\newtheorem{pro}{Proposition}
\newtheorem{defi}{Definition}
\begin{document}

\title{Towards Understanding the Fundamentals of Mobility in Cellular Networks}

\author{
\IEEEauthorblockA{Xingqin Lin, Radha Krishna Ganti, Philip J. Fleming and Jeffrey G. Andrews}
\thanks{Xingqin Lin and Jeffrey G. Andrews are with Department of Electrical $\&$ Computer Engineering, The University of Texas at Austin, USA. (E-mail: xlin@utexas.edu, jandrews@ece.utexas.edu). Radha Krishna Ganti is with Department of Electrical Engineering, 
Indian Institute of Technology, Madras, India. (E-mail: rganti@ee.iitm.ac.in). Philip J. Fleming is with Nokia Siemens Networks (E-mail: phil.fleming@nsn.com). This research was supported by Nokia Siemens Networks. Date revised: \today}
}

\maketitle
\thispagestyle{empty}

\begin{abstract}
Despite the central role of mobility in wireless networks, analytical study on its impact on network performance is notoriously difficult. This paper aims to address this gap by proposing a random waypoint (RWP) mobility model defined on the entire plane and applying it to analyze two key cellular network parameters: handover rate and sojourn time. We first analyze the stochastic properties of the proposed model and compare it to two other models: the classical RWP mobility model and a synthetic truncated Levy walk model which is constructed from real mobility trajectories. The  comparison shows that the proposed RWP mobility model is more appropriate for the mobility simulation in emerging cellular networks, which have ever-smaller cells. Then we apply the proposed model to cellular networks under both deterministic (hexagonal) and random (Poisson) base station (BS) models. We present analytic expressions for both handover rate and sojourn time, which have the expected property that the handover rate is proportional to the square root of BS density. Compared to an actual BS distribution, we find that the Poisson-Voronoi model is about as accurate in terms of mobility evaluation as hexagonal model, though being more pessimistic in that it predicts a higher handover rate and lower sojourn time.
\end{abstract}

\IEEEpeerreviewmaketitle

\section{Introduction}
\label{sec:intro}

The support of mobility is a fundamental aspect of wireless networks \cite{camp2002survey, akyildiz1999mobility}.  Mobility management is taking on new importance and complexity in emerging cellular networks, which have ever-smaller and more irregular cells.

\subsection{Background and Proposed Approach}

As far as cellular networks are concerned, typical questions of interest include how mobility affects handover rate and sojourn time. Handover rate is defined as the expected number of handovers per unit time. It is directly related to the network signaling overhead. Clearly, the handover rate is low for large cells and/or low mobility, but smaller cells are necessary to increase the capacity of cellular networks through increased spectral and spatial reuse. Thus, analytic results on handover rate will be useful for network dimensioning, and in order to understand tradeoffs between optimum cell associations and the undesired overhead in session setup and tear downs. Sojourn time is defined as the time that a mobile resides in a typical cell. It represents the time that a BS would provide service to the mobile user.  In the case of a short sojourn time, it may be preferable from a system-level view to temporarily tolerate a suboptimal BS association versus initiating a handover into and out of this cell.  This concept is supported in 4G standards with a threshold rule \cite{3gppMobility}, but it is fair to say that the theory behind such tradeoffs is not well developed. Though handover rate is inversely proportional to expected sojourn time, their distributions can be significantly different, which motivates us to study them separately in this paper.

To explore the role of mobility in cellular networks, particularly handover rate and sojourn time, mobility modeling is obviously a necessary first step. In this paper, we focus on RWP mobility model originally proposed in \cite{johnson1996dynamic} due to its simplicity in modeling movement patterns of mobile nodes. In this model, mobile users move in a finite domain $\mathcal{A}$. At each turning point, each user selects the destination point (referred to as waypoint)  uniformly distributed in $\mathcal{A}$ and chooses the velocity from a uniform distribution. Then the user moves along the line (whose length is called transition length) connecting its current waypoint to the newly selected waypoint at the chosen velocity. This process repeats at each waypoint. Optionally, the user can have a random pause time at each waypoint before moving to the next waypoint.  In this classical RWP mobility model, the stationary spatial node distribution tends to concentrate near the center of the finite domain and thus may be inconvenient if users locate more or less uniformly in the network \cite{bettstetter2004stochastic}. Another inconvenience is that the transition lengths in the classical RWP mobility model are of the same order as the size of the domain $\mathcal{A}$, which seems to deviate significantly from those observed in human walks \cite{rhee2011levy}. 

To solve the inconveniences mentioned above, we propose a RWP mobility model defined on the entire plane. In this model, at each waypoint the mobile node chooses 1) a random direction uniformly distributed on $[0,2\pi]$, 2) a transition length from some distribution, and 3) a velocity from some distribution. Then the node moves to the next waypoint (determined by choice 1 and 2) at the chosen velocity. As in the classical RWP mobility model, the node can have a random pause time at each waypoint.  Note that human movement has very complex temporal and spatial correlations and its nature has not been fully understood \cite{gonzalez2008understanding, song2010modelling}. It is fair to say that none of the existing mobility models are fully realistic. The motivation of this work is not to solve this open problem. Instead, we aim to propose a  tractable RWP mobility model, which can be more appropriate than the classical RWP mobility model for the study of mobility in emerging cellular networks, so that it can be utilized to analyze the impact of mobility in cellular networks to provide insights on network design. 

After presenting the proposed mobility model, we analyze its associated stochastic properties and compare it to two other models: the classical RWP mobility model and a synthetic truncated Levy walk model\cite{rhee2011levy} which is constructed from real mobility trajectories. The  comparison shows that the proposed model is more appropriate for the mobility simulation in emerging cellular networks. Then the proposed RWP mobility model is applied to cellular networks whose BSs are modeled in two conceptually opposite ways.  The first is the traditional hexagonal grid, which represents an extreme in terms of regularity and uniformity of coverage, and is completely deterministic.  The second is to model the BS locations as drawn from a Poisson point process (PPP), which creates a set of BSs with completely independent locations \cite{StoFle97,Bro00}.   Under the PPP BS model, the cellular network can be viewed as a Poisson-Voronoi tessellation if the mobile users are assumed to connect to the nearest BSs.  As one would expect, most actual deployments of cellular networks lie between these two models, both qualitatively -- i.e. they are neither perfectly regular nor perfectly random -- and quantitatively, i.e. the SINR statistics and other statistical measures are bounded by these two approaches \cite{andrews2010tractable}. Thus, both models are of interest and we apply the proposed RWP mobility model to cellular networks under both models. Analytic expressions for handover rate and sojourn time are obtained under both models, some of which are quite simple and lead to intuitive interpretations.

In some aspects the proposed RWP mobility model is similar to the random direction (RD) mobility model. Though the classical RD mobility model does not have a non-homogenous spatial distribution as in the classical RWP mobility model, its transition lengths are still of the same order as the size of the simulation area, which seems to deviate significantly from those in human walks \cite{royer2001analysis}. This inconvenience may be solved by a modified RD mobility model \cite{nain2005properties}, in which the transition length distribution is co-determined by the velocity distribution and moving time distribution. This may complicate the theoretical analysis in this paper. This motivates us to propose a mobility model with transition length distribution directly specified. Besides, transition length data seems to be more readily available than moving time data in many real data sets \cite{rhee2011levy, song2010modelling}. Thus, it might be easier to use the proposed RWP mobility model when fitting the mobility model to real mobility trajectories.

Before ending this subsection, it is worth mentioning the many trace-based mobility models \cite{hong1999group, jardosh2003towards, musolesi2004ad, saha2004modeling, hsu2005weighted, mcnett2005access, kim2006extracting, lee2009slaw, hsu2009modeling, aschenbruck2011trace, cho2011friendship}.  One main drawback of trace-based mobility models is that, due to the differences in the trace acquiring methods, sizes of trace data, and data filtration techniques, mobility model built on one trace data set may not be applicable to other network scenarios. Moreover, trace-based mobility models are often not mathematically tractable (at least very complicated), preventing researchers from analytically studying the performance of various protocols and/or getting quick informative results in mobile networks. In contrast, random synthetic models, including random walk, the classical and the proposed RWP, Gauss-Markov, etc., are generic and more mathematically tractable. In addition, some aspects of some random synthetic models (including the proposed RWP model) can be fitted  using traces, as done in \cite{mcnett2005access}. Nevertheless, trace-based mobility models are more realistic and scenario-dependent, which is crucial to perform reliable performance evaluation of mobile networks. Hence, both random and trace-based synthetic models are important for the study of mobile networks. In particular, when evaluating a specific protocol in mobile networks, researchers can utilize random synthetic models for example the proposed RWP model to quickly get informative results. Meanwhile, researchers can build specific scenario-dependent trace-based model by collecting traces if possible. Then based on the constructed trace-based model researchers can further perform more extensive and reliable evaluation of the protocol beyond just using random synthetic models.

\subsection{Related work}

The proposed RWP model is based on the one originally proposed in \cite{johnson1996dynamic}. Due to its simplicity in modeling movement patterns of mobile nodes, the classical RWP mobility model has been extensively studied in literature \cite{bettstetter2004stochastic, hyytia2006spatial,bettstetter2003node, resta2002analysis}. These studies analyzed the various stochastic mobility parameters, including transition length, transition time, direction switch rate, and spatial node distribution. When it comes to applying the mobility model to hexagonal cellular networks, simulations are often required to study the impact of mobility since the analysis is hard to proceed \cite{akyildiz1999mobility, 3gppMobility}. Nonetheless, the effects of the classical RWP mobility model to cellular networks have been briefly analyzed in \cite{bettstetter2003node} and a more detailed study can be found in \cite{hyytia2007random}. However, as remarked above, the classical RWP model may not be convenient in some cases. In contrast, we analyze and obtain insight about the impact of mobility under a hexagonal model through applying the relatively clean characterization of the proposed RWP model, as \cite{hyytia2007random} did through applying the classical RWP model.

The application of the proposed RWP mobility model to cellular networks modeled as Poisson-Voronoi tessellation requires stochastic geometric tools, which are becoming increasingly sophisticated and popular \cite{stoyan1995stochastic, HaenggiBook, haenggi2009stochastic, baccelli2009stochastic}.  As far as mobility is concerned,  \cite{baccelli1997stochastic} proposed a framework to study the impact of mobility in cellular networks modeled as Poisson-Voronoi tessellation. In particular, the authors proposed a Poisson line process to model the road system, along which the mobile users move. Thus, the mobility pattern in  \cite{baccelli1997stochastic} is of large scale while the RWP mobility model in this paper is of small scale. Our study can be viewed as  complementary to \cite{baccelli1997stochastic}. For example, our result indicates that if cells decrease in size such that the BS density per unit area is increased by 4 times, then the handover rate would be doubled. Note that recent work showed that the PPP model for BSs was about as accurate in terms of SINR distribution as the hexagonal grid for a representative urban cellular network \cite{andrews2010tractable}.   Interestingly, we find that this observation is also true for mobility evaluation, though the Poisson-Voronoi model yields slightly higher handover rate and lower sojourn time and thus is a bit more pessimistic than the hexagonal model (which is correspondingly optimistic).

We briefly summarize the contributions of this work: 1) We propose a tractable RWP mobility model which overcomes some inconveniences of the classical RWP mobility model and is more appropriate for mobility study in cellular networks. 2) We obtain analytical results for handover rate and sojourn time in cellular networks. Though a specific transition length distribution is assumed, the analysis in this paper is quite general and can be extended to any other transition length distribution which has finite mean.
3) We connect the mobility results for the two conceptually opposite models of cellular networks: hexagonal and Poisson-Voronoi tessellation.

\section{Proposed RWP mobility model on infinite planes}
\label{sec:proposed}

We describe the proposed RWP mobility model in this section.  As in the description of the classical RWP model  (see, e.g., \cite{bettstetter2004stochastic}), the movement trace of a node  can be formally described by an infinite sequence of quadruples:
$
\{(\boldsymbol{X}_{n-1}, \boldsymbol{X}_{n}, V_n, S_n ) \}_{n \in \mathbb{N}} 
$
, where $n$ denotes the $n$-th movement period. During the $n$-th movement period, $\boldsymbol{X}_{n-1}$ denotes the  starting waypoint,
$\boldsymbol{X}_{n}$ denotes the  target waypoint,
$V_n$ denotes the velocity,
and $S_n$ denotes the pause time  at the waypoint $\boldsymbol{X}_{n}$. Given the current waypoint $\boldsymbol{X}_{n-1}$, the next waypoint $\boldsymbol{X}_{n}$ is chosen such that the included angle between the vector $\boldsymbol{X}_{n} - \boldsymbol{X}_{n-1}$ and the abscissa is uniformly distributed on $[0,2\pi]$ and the transition length $L_n= \parallel \boldsymbol{X}_{n} - \boldsymbol{X}_{n-1} \parallel$ is a nonnegative random variable. The selection of waypoints is independent and identical for each movement period.

Though there is a degree of freedom in modeling the random transition lengths, we focus on a particular choice  in this paper. Specifically, the transition lengths $\{L_1, L_2, ...\}$ are chosen to be independent and identically distributed (i.i.d.) with cumulative distribution function (cdf)
\begin{align}
P ( L \leq l )  = 1 - \exp(- \lambda \pi l^2), l \geq 0,
\end{align}
i.e., the transition lengths are Rayleigh distributed. Velocities $V_n$ are i.i.d. with distribution $P_V(\cdot)$. Pause times $S_n$ are also i.i.d.  with distribution $P_S (\cdot)$. This selection bears an interesting interpretation: Given the waypoint $\boldsymbol{X}_{n-1}$, a homogeneous PPP $\Phi(n)$ with intensity $\lambda$ is independently generated and then the nearest point in $\Phi (n)$ is selected as the next  waypoint, i.e.,
$
\boldsymbol{X}_{n} = \textrm{arg} \min_{\boldsymbol{x} \in \Phi(n)} \parallel \boldsymbol{x} - \boldsymbol{X}_{n-1} \parallel .
$

Under the proposed model, different mobility patterns can be captured by choosing different $\lambda$'s.  Larger $\lambda$ statistically implies that the transition lengths $L$ are shorter. This further implies that the movement direction switch rates are higher. These mobility parameters may be appropriate for mobile users walking and shopping in a city, for example. In contrast, smaller $\lambda$ statistically implies that the transition lengths $L$ are longer and the corresponding movement direction switch rates are lower. These mobility parameters may be appropriate for driving users, particularly those on the highways. This intuitive result can also be observed in Fig. \ref{fig:2}, which shows $4$ sample traces of the proposed RWP model.

\begin{figure}
\centering
\includegraphics[width=8cm]{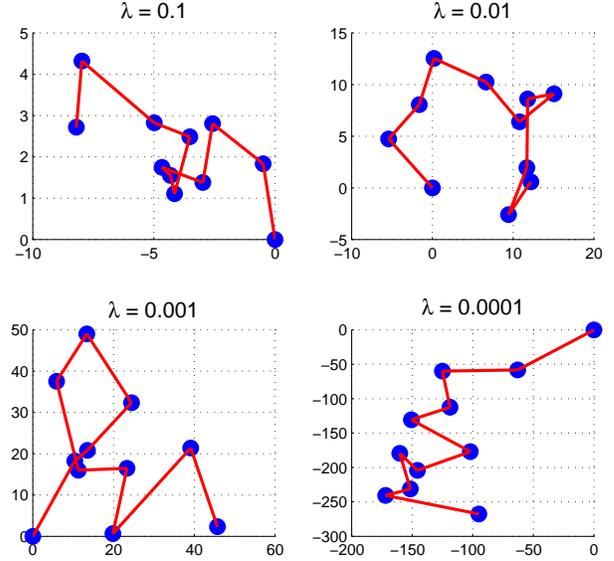}
\caption{Sample traces of the proposed RWP mobility models. The transition lengths are statistically shorter with larger mobility parameter $\lambda$, and vice versa.}
\label{fig:2}
\end{figure}

\section{Stochastic properties of the Proposed RWP mobility model}
\label{sec:stochastic}

In this section, we  first study the various stochastic properties of the proposed RWP mobility model. Stochastic properties of interests include transition length, transition time, direction switch rate, and the spatial node distribution.  We then perform simulation to compare the proposed RWP mobility model to the classical RWP mobility model and a synthetic Levy Walk model proposed in \cite{rhee2011levy}, which is constructed from real mobility trajectories.

\subsection{Transition length}

We define \textit{transition length} as the Euclidean distance between two successive waypoints. In the proposed model, the transition lengths can be described by a stochastic process $\{L_n\}_{n \in \mathbb{N}}$ where $L_n$ are i.i.d. Rayleigh distributed with 
\begin{align}
\mathbb{E} [ L] = \frac{1}{2\sqrt{\lambda}}.
\end{align}
Note that the transition lengths are not i.i.d. in the classical RWP mobility model. Indeed, a node currently located near the border of the finite domain tends to have a longer transition length while a node located around the center of the finite domain statistically has a shorter transition length for the next movement period. As a result, it is difficult to obtain the probability distribution for each random transition length.

Nevertheless, the random waypoints $\boldsymbol{X}_{n}$ are i.i.d. in the classical RWP mobility model, which is obvious since they are selected uniformly from a finite domain and independently over movement periods. This property forms the basis for the analysis of the classical RWP mobility model (see, e.g., \cite{bettstetter2004stochastic, hyytia2006spatial}). In contrast, the waypoints in our proposed model are not i.i.d. but form a Markov process.

\subsection{Transition time}
\label{subsec:1}

We define \textit{transition time} as the time a node spends during the movement between two successive waypoints. We denote by $T_n$ the transition time for the movement period $n$. Then 
$T= L/V $
where we omit the period index $n$ since $T_n$ are i.i.d.. Denote $\mathcal{V} \in \mathbb{R}$ as the range of the random velocity $V$. Given any velocity distribution $P_V(\cdot)$, the probability distribution of $T$ is given as follows.

\begin{pro}
The cdf of the random transition time $T$ is given by
\begin{equation}
P( T \leq t ) = 1 - \int_{\mathcal{V}}  \exp(- \lambda \pi v^2 t^2) \ {\rm d} P_V (v), t\geq 0.
\end{equation}
\label{pro:4}
\end{pro}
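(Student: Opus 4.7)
The plan is to derive the cdf of $T = L/V$ by conditioning on the velocity $V$ and then applying the already-specified Rayleigh distribution of the transition length $L$. The key structural fact I will use is that $L$ and $V$ are independent across a single movement period (the paper declares $\{L_n\}$, $\{V_n\}$ to be independent i.i.d.\ sequences), so that the conditional distribution of $L$ given $V=v$ is still the Rayleigh distribution with parameter $\lambda$.

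First I would write $P(T \leq t) = P(L/V \leq t)$ and, assuming $V > 0$ almost surely (which is implicit in the model, since $V$ governs motion between waypoints), rewrite this as $P(L \leq tV)$. Next I would invoke the law of total probability with respect to $V$:
\begin{equation*}
P(T \leq t) = \int_{\mathcal{V}} P(L \leq tv \mid V=v)\, {\rm d}P_V(v).
\end{equation*}
By independence, the conditional probability equals the unconditional Rayleigh cdf evaluated at $tv$, namely $1 - \exp(-\lambda \pi v^2 t^2)$ for $t \geq 0$.

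Substituting back and using the fact that $P_V$ is a probability measure (so that $\int_{\mathcal{V}} {\rm d}P_V(v) = 1$) gives
\begin{equation*}
P(T \leq t) = 1 - \int_{\mathcal{V}} \exp(-\lambda \pi v^2 t^2)\, {\rm d}P_V(v),
\end{equation*}
which is the claimed expression. Since the derivation is essentially a one-line conditioning argument backed by independence, there is no substantive obstacle; the only subtlety worth a remark is handling the support $\mathcal{V}$ (in particular, the requirement $V>0$ so that $L/V$ is well defined), which is a modeling assumption rather than a mathematical difficulty. I would also note in passing that the expression can be read as a Laplace-type transform of the distribution of $V^2$ evaluated at $\lambda \pi t^2$, which will be convenient for the subsequent handover-rate and sojourn-time computations.
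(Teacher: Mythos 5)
Your proof is correct and is exactly the standard conditioning argument the paper has in mind (the paper omits the proof of Prop.~\ref{pro:4} for brevity precisely because it reduces to this one-line computation): write $P(T\le t)=P(L\le tV)$, condition on $V$, and use the independence of $L$ and $V$ together with the Rayleigh cdf. Your remarks on the positivity of $V$ and the Laplace-transform reading are sensible but inessential; nothing further is needed.
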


The proof of Prop. \ref{pro:4} is omitted for brevity. As a specific application of Prop. \ref{pro:4}, the following corollary gives closed form expressions for transition times under two types of velocity distributions.

\begin{cor}
\begin{enumerate}
\item 
If $V \equiv \nu$ where $\nu$ is a positive constant, the pdf of transition time $T$ is 
\begin{align}
f_{T} (t) = 2\pi \lambda \nu^2 t  e^{- \lambda \pi \nu^2 t^2}, t\geq 0.
\label{eq:11}
\end{align}
\item
If $V$ is uniformly distributed on $[v_{\min}, v_{\max}]$, the pdf of transition time $T$ is 
\begin{align}
f_{T} (t) = \frac{g(v_{\min}) - g(v_{\max})}{(v_{\max} - v_{\min})t}, t \geq 0,
\label{eq:12}
\end{align}
where $g(x) \triangleq x e^{- \lambda \pi t^2 x^2} + \displaystyle \frac{1}{\lambda^{1/2} t} Q(\sqrt{2\pi \lambda} t x) $ is non-increasing, and $\displaystyle Q(x) = \frac{1}{\sqrt{2 \pi}} \int_{x}^{\infty} e^{-\frac{u^2}{2}} \ {\rm d} u $.
\end{enumerate}
\label{cor:2}
\end{cor}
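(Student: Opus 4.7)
The plan is to derive both pdfs by direct substitution into Proposition~\ref{pro:4} and then verify the monotonicity of $g$ by a short derivative computation.

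For part 1, since $V\equiv\nu$ is deterministic, the integral in Proposition~\ref{pro:4} collapses to a single evaluation, giving $P(T\leq t)=1-e^{-\lambda\pi\nu^2 t^2}$. Differentiating in $t$ immediately yields the stated pdf. Intuitively this is just the Rayleigh density arising from applying the deterministic scaling $T=L/\nu$ to the Rayleigh-distributed $L$.

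For part 2, I would substitute the uniform density $1/(v_{\max}-v_{\min})$ into Proposition~\ref{pro:4} and differentiate in $t$ under the integral sign to obtain
\begin{align*}
f_T(t) = \frac{2\pi\lambda\,t}{v_{\max}-v_{\min}}\int_{v_{\min}}^{v_{\max}} v^{2}\, e^{-\lambda\pi v^{2} t^{2}}\,dv.
\end{align*}
The integrand has no elementary antiderivative, so the core step is integration by parts with $u=v$ and $dw=v\,e^{-\lambda\pi v^{2} t^{2}}\,dv$ (so that $w=-e^{-\lambda\pi v^{2} t^{2}}/(2\pi\lambda t^{2})$). The boundary terms supply the $x\,e^{-\lambda\pi t^{2} x^{2}}$ contributions to $g$, while the leftover integral of $e^{-\lambda\pi v^{2} t^{2}}$ is reduced to the Gaussian tail via the substitution $u=\sqrt{2\pi\lambda}\,t v$, which supplies the $(1/\sqrt{\lambda}\,t)\,Q(\sqrt{2\pi\lambda}\,t x)$ contributions. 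Collecting the $v_{\min}$ and $v_{\max}$ endpoints into the symmetric form $g(v_{\min})-g(v_{\max})$ then produces the claimed expression for $f_T$.

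To confirm that $g$ is non-increasing (which in particular guarantees $f_T\geq 0$), I would differentiate $g$ in $x$ using $Q'(u)=-e^{-u^{2}/2}/\sqrt{2\pi}$ and the chain rule. The $e^{-\lambda\pi t^{2} x^{2}}$ term produced by differentiating the factor $x$ in the first summand of $g$ is cancelled exactly by the contribution from differentiating the $Q$-term, leaving $g'(x)=-2\pi\lambda t^{2} x^{2}\,e^{-\lambda\pi t^{2} x^{2}}\leq 0$. No serious obstacle is expected: the only delicate step is bookkeeping, arranging the integration-by-parts boundary terms and the $Q$-function so that they combine into the clean symmetric form of $g$, rather than anything analytically hard.
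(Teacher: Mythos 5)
Your proposal is correct and follows exactly the route the paper intends (the paper omits the proof, noting only that the corollary is "a specific application of Prop.~\ref{pro:4}"): substitute the velocity law into Proposition~\ref{pro:4}, differentiate in $t$, and for the uniform case integrate by parts to produce the boundary terms and the Gaussian-tail $Q$ term that assemble into $g$. Your derivative computation $g'(x)=-2\pi\lambda t^{2}x^{2}e^{-\lambda\pi t^{2}x^{2}}\leq 0$, with the exact cancellation you describe, correctly establishes the monotonicity claim.
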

Next, we derive the mean transition time. Instead of applying the cdf of $T$, we notice that
\begin{align}
E[T] &= E [\frac{L}{V}] = E[L] E [\frac{1}{V}] \notag \\
&= E[L] \int_{\mathcal{V}} \frac{1}{v} \ {\rm d} P_V (v)  = \frac{1}{2 \sqrt{\lambda}} \int_{\mathcal{V}} \frac{1}{v} \ {\rm d} P_V (v).
\label{eq:14}
\end{align}   
From (\ref{eq:14}), we can easily obtain that, if $V \equiv \nu$,
\begin{align}
E[T] = \frac{1}{2 \nu \sqrt{\lambda}}, 
\label{eq:15}
\end{align}
and, if $V$ is uniformly distributed on $[v_{\min}, v_{\max}]$,
\begin{align}
E[T] = \frac{\ln v_{\max} - \ln v_{\min}}{2 \sqrt{\lambda} (v_{\max} - v_{\min}) }.
\label{eq:16}
\end{align}
From ($\ref{eq:16}$), it is clear that $v_{\min} > 0$ is required to ensure finite expected transition time.

\subsection{Direction switch rate}

In this subsection, we study the \textit{direction switch rate}, which is the inverse of the time between two direction changes and thus characterizes the frequency of direction change and is also a mobility parameter of interest \cite{bettstetter2004stochastic}. To this end, we first introduce the \textit{period time}, the time a node spends between two successive waypoints. In particular, period time $T_p$ is the sum of the pause time and the transition time, i.e., $T_p = T + S$, where $S$ denotes the random pause time. Then the direction switch rate denoted by $D$ is defined to be the inverse of the period time:
$D = 1/T_p$,
whose probability distribution is described in the following proposition.
\begin{pro}
The cdf of the direction switch rate $D$ is given by
\begin{align}
P( D \leq d ) = \int_{\mathcal{S}}  \int_{\mathcal{V} }    \exp \left(- \lambda \pi v^2 \frac{(d-s)^2}{d^2}  \right)   \ {\rm d} P_V (v) \ {\rm d} P_S (s),
\label{eq:18}
\end{align}
where $\mathcal{S} \in \mathbb{R}$ denotes the range of the random pause time.
\label{pro:5}
\end{pro}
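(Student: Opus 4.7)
The plan is to condition on the velocity $V$ and the pause time $S$, exploit their independence from the transition length $L$, and apply the Rayleigh survival function of $L$ from Section~\ref{sec:proposed}. This parallels the argument for Proposition~\ref{pro:4}, extended to accommodate the pause time contribution in the total period.

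First, I would translate the event $\{D \leq d\}$ into a tail event on $L$. Since $D = 1/T_p$ and $T_p = L/V + S$, the event $\{D \leq d\}$ coincides with $\{L/V + S \geq 1/d\}$, which rearranges to $\{L \geq V(1/d - S)\}$. Conditioning on $V = v$ and $S = s$, which are independent of $L$, and using the Rayleigh survival $P(L \geq l) = \exp(-\lambda \pi l^2)$ for $l \geq 0$, the conditional probability is
$$P\!\left(L \geq v(1/d - s)\right) = \exp\!\left(-\lambda \pi v^2 (1/d - s)^2\right)$$
whenever $s < 1/d$; it equals $1$ when $s \geq 1/d$, because then the threshold on $L$ is non-positive and is automatically satisfied.

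Next, I would integrate the conditional probability against the product measure $dP_V(v)\,dP_S(s)$, justified by the independence of $V$ and $S$ together with Fubini's theorem applied to a bounded non-negative integrand; after clearing denominators in the exponent, this produces the form $-\lambda\pi v^2(d-s)^2/d^2$ appearing in the proposition. The main subtlety — and essentially the only one — is the degenerate range $s \geq 1/d$, where the integrand is capped at $1$ rather than given by the exponential expression; this contribution must be added separately (or the domain of integration restricted accordingly). Beyond this bookkeeping, the derivation is a direct application of the Rayleigh cdf and the independence of $L$, $V$, and $S$, and presents no deep technical obstacle.
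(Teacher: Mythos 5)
The paper gives no proof of Proposition~\ref{pro:5} (it is ``omitted for brevity''), so there is nothing to compare line by line; your conditioning argument---rewrite $\{D\le d\}$ as the tail event $\{L \ge V(1/d - S)\}$, condition on $V$ and $S$ using their independence from $L$, and apply the Rayleigh survival function $P(L\ge l)=\exp(-\lambda\pi l^2)$---is certainly the intended route, and your observation that the conditional probability saturates at $1$ on the range $s\ge 1/d$ is a correct caveat that the displayed formula silently ignores.

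The problem is the final identification. Clearing denominators in $(1/d-s)^2$ gives $(1-sd)^2/d^2$, \emph{not} $(d-s)^2/d^2$; these coincide only in degenerate cases. Your (correct) computation therefore yields
\begin{align*}
P(D\le d) \;=\; \int_{\mathcal S}\int_{\mathcal V}\exp\!\left(-\lambda\pi v^2\Bigl(\tfrac{1}{d}-s\Bigr)^{2}\right)\,{\rm d}P_V(v)\,{\rm d}P_S(s)
\end{align*}
(on the region $s\le 1/d$), which is not the expression stated in the proposition. A dimensional check confirms that the printed exponent is the one at fault: $d$ is a rate and $s$ a time, so $d-s$ is not even well formed, whereas $\lambda\pi v^2(1/d-s)^2$ is dimensionless as required. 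A second sanity check: taking $S\equiv 0$ should recover Proposition~\ref{pro:4} via $P(D\le d)=P(T\ge 1/d)=\int_{\mathcal V}\exp(-\lambda\pi v^2/d^2)\,{\rm d}P_V(v)$; your corrected exponent gives exactly this ($(1/d-0)^2=1/d^2$), while the printed one gives $\exp(-\lambda\pi v^2)$, independent of $d$, which cannot be a cdf in $d$. So the proposition as printed contains a typo and your derivation in effect corrects it---but your sentence claiming that ``clearing denominators \ldots produces the form $-\lambda\pi v^2(d-s)^2/d^2$'' is an algebra error as written. Either fix that step and state the corrected formula, or keep your formula and explicitly flag the discrepancy with the statement; do not assert a match that does not hold.
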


We omit the proof of Prop. \ref{pro:5} for brevity. Given the distributions of velocity and transition time, i.e., $P_V (v)$ and $P_S (s)$, the distribution of direction switch rate $D$ can be found by Prop. \ref{pro:5}. 

\subsection{Spatial node distribution}

In this subsection, we study the \textit{spatial node distribution}. To this end, let $\boldsymbol{X}_0$ and $\boldsymbol{X}_1$ be two successive waypoints. Given $\boldsymbol{X}_0$, we are interested in the probability that the moving node resides in some measurable set $\mathcal{A}$ during the movement from $\boldsymbol{X}_0$ to $\boldsymbol{X}_1$. We first derive the spatial node distribution given in the following theorem with the assumption that the mobile node does not have pause time.

\begin{thm}
Assume that $S_n \equiv 0, \forall n$, and that   $\boldsymbol{X}_0$ is at the origin. Then the spatial node distribution between $\boldsymbol{X}_0$ and $\boldsymbol{X}_1$ is characterized by the pdf $f(r, \theta)$ given by
\begin{equation}
f(r, \theta) = \frac{\sqrt{\lambda}}{\pi r} \exp (-\lambda \pi r^2 ).
\label{eq:20}
\end{equation}
\label{pro:6}
\end{thm}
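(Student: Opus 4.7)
The plan is to interpret $f(r,\theta)$ as the stationary, arc-length-weighted (equivalently time-averaged, since $S_n\equiv 0$ and the velocity is constant within each transition) density of the node's position during a movement period. Concretely, for any measurable set $\mathcal{A}$, the intended identity is
$P(\text{node resides in }\mathcal{A}) = E\bigl[\text{arc length of the trajectory in }\mathcal{A}\bigr]\big/E[L]$,
with $E[L]=1/(2\sqrt{\lambda})$ already recorded in the paper. The proof then reduces to computing this ratio for arbitrary $\mathcal{A}$ and reading off the density in polar coordinates.

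First, I would parametrize the trajectory. With $\boldsymbol{X}_0$ at the origin, the node traces the radial ray $\gamma(s)=(s\cos\Theta,s\sin\Theta)$ for $s\in[0,L]$, where $\Theta$ is uniform on $[0,2\pi]$ and independent of the Rayleigh length $L$ with survival function $P(L\geq s)=e^{-\lambda\pi s^2}$. Applying Fubini and conditioning on $\Theta$,
$E\bigl[\int_0^L \mathbf{1}_\mathcal{A}(\gamma(s))\,ds\bigr]=\int_0^{2\pi}\frac{d\theta}{2\pi}\int_0^\infty \mathbf{1}_\mathcal{A}(s\cos\theta,s\sin\theta)\,e^{-\lambda\pi s^2}\,ds$.

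Second, I would switch to polar coordinates $(r,\theta)$ for the spatial variable. Because the trajectory is purely radial, the substitution $s\mapsto r$ is trivial, and the integral becomes $\int_\mathcal{A}\tfrac{1}{2\pi}e^{-\lambda\pi r^2}\,dr\,d\theta$. Dividing by $E[L]=1/(2\sqrt{\lambda})$ and then multiplying and dividing by $r$ so as to express the integrand against the polar area element $r\,dr\,d\theta$ isolates the density $f(r,\theta)=\tfrac{\sqrt{\lambda}}{\pi r}\exp(-\lambda\pi r^2)$, as claimed. A quick consistency check is $\int_0^{2\pi}\!\!\int_0^\infty f(r,\theta)\,r\,dr\,d\theta=2\sqrt{\lambda}\int_0^\infty e^{-\lambda\pi r^2}\,dr=1$.

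The main obstacle is conceptual rather than computational: pinning down the exact sense in which $f$ is a genuine pdf, since a naive ``uniform sampling within a single transition'' interpretation would instead give the density $\tfrac{\lambda}{r}\int_r^\infty e^{-\lambda\pi\ell^2}d\ell$, which does not match the stated formula. Once arc-length (i.e.\ long-run time) weighting across transitions is adopted, the rest is a single Fubini exchange: the factor $e^{-\lambda\pi r^2}$ is precisely the Rayleigh survival $P(L\geq r)$, the $1/(2\pi)$ comes from the uniform direction, and the $1/r$ is merely the Jacobian converting $dr\,d\theta$ to the polar area element $r\,dr\,d\theta$.
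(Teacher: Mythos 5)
Your proof is correct and takes essentially the same route as the paper: both interpret $f(r,\theta)$ as the expected arc length of the transition spent per unit area, normalized by $E[L]=1/(2\sqrt{\lambda})$, and both reduce the computation to the Rayleigh survival function $P(L\geq r)=e^{-\lambda\pi r^2}$ combined with the uniform direction and the polar Jacobian. Your Fubini computation over an arbitrary measurable set is a slightly cleaner rendering of the paper's infinitesimal-${\rm d}A$ argument (which routes through the waypoint density of Lemma \ref{lem:1}); the only loose end is that the equivalence of arc-length and time weighting under random $V$ rests on the independence of $V$ from $(L,\Theta)$ rather than on $V$ being constant within a transition, a point the paper likewise dispatches in one line.
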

\begin{proof}
See Appendix \ref{proof:pro6}.
\end{proof}

The physical interpretation of $f(r, \theta)$ is as follows. Let ${\rm d}A(r,\theta)$ be a small area around the point $(r,\theta)$  given in polar coordinate. Then the probability $P({\rm d}A(r,\theta))$ that the moving node resides in some measurable set $\mathcal{A}$ during the movement from $\boldsymbol{X}_0$ to $\boldsymbol{X}_1$ is approximately given by
\begin{align}
P({\rm d}A(r,\theta))  \approx  f(r, \theta) \cdot |{\rm d}A(r,\theta) | ,
\label{eq:21}
\end{align}
where $| {\rm d}A(r,\theta) |$ denotes the area of the set ${\rm d}A(r,\theta)$.
Also, as noted in the proof in Appendix \ref{proof:pro6}, $f(r, \theta)$ can be regarded as the ratio of the expected proportion of transition time in the set ${\rm d}A(r,\theta)$ to $| {\rm d}A(r,\theta) |$.

Now let us consider the case where the mobile node has random pause time $S$, characterized by the probability measure $P_S (\cdot)$. In this case, the spatial node distribution is given in the following theorem.
\begin{thm}
Assume that $\boldsymbol{X}_0$ is at the origin. Then the spatial node distribution between $\boldsymbol{X}_0$ and $\boldsymbol{X}_1$ is characterized by the pdf $\tilde{f}(r, \theta)$:
\begin{align}
\tilde{f}(r, \theta) = p \cdot \frac{\sqrt{\lambda}}{\pi r} \exp (-\lambda \pi r^2 ) + (1-p) \cdot \lambda \exp(-\lambda \pi r^2),
\label{eq:22}
\end{align}
where $\displaystyle p = \frac{E[T]}{E[T] + E[S]}$ is the expected proportion of transition time from $\boldsymbol{X}_0$ to $\boldsymbol{X}_1$.
\label{thm:2}
\end{thm}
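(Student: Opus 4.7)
The plan is to extend Theorem 1 by accounting for the fraction of the period time spent paused at the endpoint $\boldsymbol{X}_1$. The key interpretive hook is the remark following Theorem 1: $f(r,\theta)$ represents the ratio of expected transit time in an infinitesimal area ${\rm d}A(r,\theta)$ to the area itself, normalized by the expected transit time $E[T]$. For the pause-enabled case I would repeat this ratio argument, but now divide the total expected time in ${\rm d}A(r,\theta)$ by the expected period time $E[T]+E[S]$, and split the numerator into a transit contribution and a pause contribution.

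First I would write the expected sojourn time in a small set ${\rm d}A(r,\theta)$ during one period as
\begin{equation*}
E[T]\, f(r,\theta)\, |{\rm d}A(r,\theta)| \;+\; E[S]\,\mathbb{P}(\boldsymbol{X}_1 \in {\rm d}A(r,\theta)),
\end{equation*}
where the first term comes directly from Theorem 1 and the second term reflects that the pause time $S$ is spent entirely at the next waypoint $\boldsymbol{X}_1$. Dividing by $(E[T]+E[S])\,|{\rm d}A(r,\theta)|$ and letting the area shrink yields $\tilde{f}(r,\theta) = p\,f(r,\theta) + (1-p)\,h(r,\theta)$, where $h(r,\theta)$ is the density of $\boldsymbol{X}_1$ at $(r,\theta)$ with respect to the planar area element.

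Next I would compute $h(r,\theta)$. By construction of the proposed model, conditional on $\boldsymbol{X}_0$ at the origin, the transition length $L_1=\|\boldsymbol{X}_1\|$ is Rayleigh with cdf $1-\exp(-\lambda\pi r^2)$ and pdf $2\pi\lambda r\exp(-\lambda\pi r^2)$, while the direction is uniform on $[0,2\pi]$ and independent of $L_1$. Hence the joint density in polar coordinates with respect to $r\,{\rm d}r\,{\rm d}\theta$ is $\lambda\exp(-\lambda\pi r^2)$, which is precisely $h(r,\theta)$. Substituting Theorem 1's $f(r,\theta)$ and this $h(r,\theta)$ into the mixture gives exactly (\ref{eq:22}).

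I expect the main obstacle to be not the algebra but the justification that the two contributions can simply be added and weighted by $p$ and $1-p$. The cleanest way is the renewal-theoretic observation that, for an alternating transit/pause cycle with independent i.i.d. durations, the long-run fraction of time spent in either state equals the ratio of its expected duration to $E[T]+E[S]$, so the space-time average density decomposes into a convex combination of the conditional spatial densities in each state. Once this is stated explicitly, the result follows by linearity from Theorem 1 and the Rayleigh/uniform characterization of $\boldsymbol{X}_1$.
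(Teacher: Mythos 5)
Your proposal is correct and follows essentially the same route as the paper: the paper likewise identifies $p=E[T]/(E[T]+E[S])$ as the long-run fraction of time spent moving and writes $\tilde{f}(r,\theta)=p\,f(r,\theta)+(1-p)\,f_{\boldsymbol{X}_1}(r,\theta)$, with the moving-phase density taken from Theorem~\ref{pro:6} and the waypoint density $\lambda\exp(-\lambda\pi r^2)$ from Lemma~\ref{lem:1}. Your explicit sojourn-time bookkeeping and renewal-reward justification merely spell out the superposition step that the paper delegates to a citation.
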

\begin{proof}
Note that the pause time $S$ is independent of both $\boldsymbol{X}$ and $V$. Besides, the non-static probability $p$ is the expected proportion of the time that the node $i$  is moving, i.e.,
\begin{align}
p = \lim_{n \to \infty} \frac{ \sum_{m=1}^{n} t_m }{ \sum_{m=1}^{n} ( t_m + s_m ) } = \frac{E[T]}{E[T] + E[S]}. 
\label{eq:athm22}
\end{align}
The pdf $\tilde{f}(r,\theta)$ is the weighted superposition of two independent components as \cite{resta2002analysis}:
\begin{align}
\tilde{f}(r,\theta) = p  \cdot  f(r,\theta) + (1-p) \cdot f_{\boldsymbol{X}_1}(r,\theta),
\label{eq:athm21}
\end{align}
where $f(r,\theta)$ is the spatial node distribution without pause time and is given in Theorem \ref{pro:6} , and $f_{\boldsymbol{X}_1}(r,\theta)$ denotes the spatial distribution of the waypoint $\boldsymbol{X}_1$ and is given in Lemma \ref{lem:1}. Plugging $f(r,\theta)$ and $f_{\boldsymbol{X}_1}(r,\theta)$ into (\ref{eq:athm21}), the expression (\ref{eq:22}) follows.
\end{proof}

\subsection{Model comparison}

In this section, we perform simulations to study the difference between the proposed RWP mobility model and the classical one. For validation, we also compare them to a synthetic truncated Levy walk model \cite{rhee2011levy}, which is constructed from real mobility trajectories. Thus, we indirectly compare the two RWP mobility models to real human walks. In the truncated Levy walk model, transition lengths have an inverse power-law distribution: $P_L(l) \sim \frac{1}{l^{1+\alpha}}, 0<\alpha<2$. The pause times also have an inverse power-law distribution: $P_S(s) \sim \frac{1}{s^{1+\beta}}, 0<\beta<2$.

We simulate the movement of a mobile node under the three mobility models, respectively. The simulation area is a 1000$\times$1000 grid.\footnote{Note that, though the proposed RWP mobility model is defined on the entire plane for analytical tractability, in a simulation one has to deal with the boundary issue. Here we use reflection method.} Since the main assumption in the proposed model is the Rayleigh distributed transition lengths, we first compare the statistics of transition lengths under the three mobility models. The results are shown in Fig. \ref{fig:22}, where $\lambda$ is chosen such that the proposed RWP mobility model has the same mean transition length as that of Levy walk model. As shown in Fig. \ref{fig:22}, transition lengths of the classical RWP mobility model are statistically much larger than those of Levy walk model. In contrast, transition lengths of the proposed RWP mobility model match those of Levy walk model better, especially in the low transition length regime.  However, the proposed RWP mobility model lacks the heavy tail characteristic of the Levy walk model. This mismatch in the high transition length regime gradually diminishes as $\alpha$ increases.

\begin{figure}
\centering
\includegraphics[width=8cm]{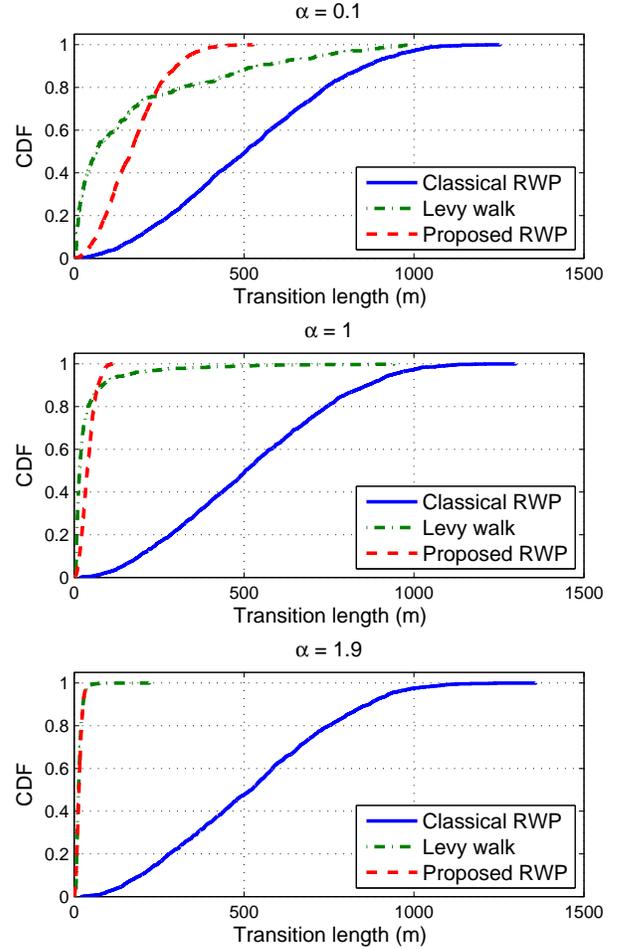}
\caption{Simulated statistics of transition lengths: $V \equiv 1$ and $\beta=1$.}
\label{fig:22}
\end{figure}

We next compare the statistics of direction switch rate under the 3 mobility models. The results are shown in Fig. \ref{fig:222}, where $\lambda$ is chosen such that the proposed RWP mobility model has the same mean transition length as that of Levy walk model. The pause time distribution used in the 2 RWP mobility models are the same power-law distribution as that used for pause time distribution in the Levy walk model. As expected, the classical RWP mobility model has much lower direction switch rate in all the 3 subplots than the Levy walk model. In contrast, the difference in direction switch rate between the proposed RWP mobility model and Levy walk model is moderate when $\beta$ is large. When $\beta$ is small, e.g., $\beta=0.1$, the statistics of the direction switch rate of the proposed RWP mobility model are almost indistinguishable from those of the Levy walk model.

\begin{figure}
\centering
\includegraphics[width=8cm]{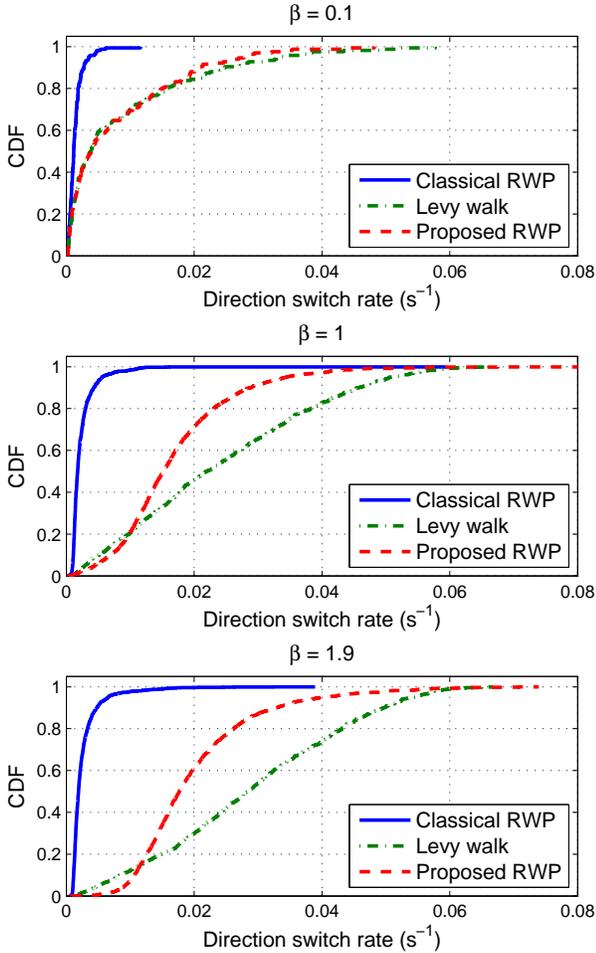}
\caption{Simulated statistics of direction switch rate: $V \equiv 1$ and $\alpha=1$.}
\label{fig:222}
\end{figure}

The above comparison shows that the proposed RWP model is more flexible for mobility simulation in future cellular networks which have ever-smaller cells than the classical RWP model. In particular, some aspects of the proposed RWP model can be directly or indirectly fitted using traces, while the classical RWP model is fixed. Nevertheless, the comparison does not imply that the statistical difference between the proposed RWP and Levy walk model are non-significant. Also, we do not claim that the proposed RWP model can fully represent human walks. Also, human walks are not Levy walks even though the Levy walk model is constructed from real mobility trajectories. Indeed, human walks have complex temporal and spatial correlations and its nature has not yet been fully understood \cite{gonzalez2008understanding}. It is fair to say that none of the existing mobility models can fully represent human walks.  

Note that though many real mobility trajectories show that transition lengths of human walks seem to have an inverse power-law distribution,   we do not choose it to model the transition lengths in our proposed RWP mobility model. The reason is that the mean transition length under  inverse power-law distribution is infinite. This will cause analytical problems, e.g., the expected number of handovers in a movement period becomes infinite. Nevertheless, the analysis in this paper can be readily extended to any other transition length distribution which has finite mean.

\section{Applications to Hexagonal Modeled Cellular Networks}
\label{sec:app1}

In this and the next section, we study the impact of mobility on important cellular network parameters, particularly handover rate and sojourn time, using the proposed RWP mobility model. We focus on hexagonal cellular networks in this section.

\subsection{Handover Rate}
\label{subsec:HH}

We assume the typical mobile user is located at the origin. Then the expected number of handovers during one movement period can be computed as 
\begin{align}
E[N]= \sum_{n = 1}^{\infty} n \int_{C_n} P ({\rm d}A(r,\theta)),
\label{eq:24}
\end{align}
where $P(dA(r,\theta))$ is the probability distribution of the waypoint density $f_{\boldsymbol{X}_1}(r,\theta)$ given in Lemma \ref{lem:1} and $C_n$ denotes the area covered by the $n$-th layer neighbouring cells. Now we formally define the handover rate.
\begin{defi}
The handover rate is defined as the expected number $E[N]$ of handovers during one movement period divided by the expected period time. Mathematically, handover rate is given by  $H = E[N] / E[T_p]$.
\label{defi:1}
\end{defi}

Note that $E[T_p]=E[T]+E[S]$ where $E[T]$ has been given in Prop. \ref{pro:4} and $E[S]$ can be determined from the pause time distribution. However, exact computation of $N$  by (\ref{eq:24}) is tedious. Thus, we propose the following approximation formula 
 \begin{align}
E[N]_{\textrm{app}} = \sum_{n = 1}^{\infty} n \int_{0}^{2\pi} \int_{(2n-1) R}^{(2n+1)R} f_{\boldsymbol{X}_1}(r,\theta) r  {\rm d}r  {\rm d}\theta,
 \label{eq:26}
 \end{align}
 where $R = \sqrt { C/\pi } $ with  $C$ being the hexagonal cell size. In other words, we approximate the $n$-th neighbouring layer by a ring with inner radius $(2n-1) R$ and outer radius $(2n+1)R$. This approximation captures the essence of (\ref{eq:24}) and allows us to derive closed form results on $E[N]$ and the corresponding lower and upper bounds.
\begin{pro}
Let $d$ be the side length of the hexagonal cell and $\lambda$ the mobility parameter. The approximation of the expected number of handovers during one movement period is given by
\begin{align}
E[N]_{\textrm{app}} = \sum_{n = 0}^{\infty} \exp \left( - \frac{3\sqrt{3}}{2} (2n+1)^2  \lambda  d^2 \right),
\label{eq:27}
\end{align}
and is bounded as $E[N]^L_{\textrm{app}} \leq E[N]_{\textrm{app}}  \leq  E[N]^U_{\textrm{app}}$ where
\begin{align}
E[N]^L_{\textrm{app}}  &\triangleq  \sqrt{\frac{\pi}{6\sqrt{3} \lambda d^2}} Q \left( \sqrt{3 \sqrt{3} \lambda d^2 } \right),  \\
E[N]^U_{\textrm{app}} &\triangleq \sqrt{\frac{\pi}{6\sqrt{3} \lambda d^2}} \left ( 1 - Q \left( \sqrt{3 \sqrt{3} \lambda d^2 } \right) \right ). 
\end{align}
Moreover, the difference $\triangle N_{\textrm{app}} (\lambda d^2)$ between the upper bound and lower bound is a strictly increasing function of $\lambda d^2$ and is within the range $(0,1)$. In particular, $\triangle N_{\textrm{app}} (\lambda d^2) \to 0 \textrm{ as } \lambda d^2 \to \infty$, and $\triangle N_{\textrm{app}} (\lambda d^2) \to 1 \textrm{ as } \lambda d^2 \to 0$.
\label{pro:7}
\end{pro}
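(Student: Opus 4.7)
My plan has three parts: evaluate the integrals in (\ref{eq:26}) in closed form to get (\ref{eq:27}), apply the integral test to produce the two bounds, and analyze the resulting representation of $\triangle N_{\textrm{app}}$.

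For the closed form, I would substitute the waypoint density $f_{\boldsymbol{X}_1}(r,\theta) = \lambda e^{-\lambda\pi r^2}$ (from Lemma \ref{lem:1}) into (\ref{eq:26}). The angular integration trivially contributes $2\pi$, and the substitution $u = \lambda\pi r^2$ reduces the radial integral on $[(2n-1)R,(2n+1)R]$ to $e^{-\lambda\pi(2n-1)^2 R^2} - e^{-\lambda\pi(2n+1)^2 R^2}$. Using the hexagonal cell area $C = \frac{3\sqrt{3}}{2} d^2$ and $R = \sqrt{C/\pi}$ yields $\lambda\pi R^2 = \mu$ with $\mu := \frac{3\sqrt{3}}{2}\lambda d^2$, and an Abel-type index shift on $\sum_{n=1}^{\infty} n[e^{-(2n-1)^2\mu} - e^{-(2n+1)^2\mu}]$ telescopes the sum to $\sum_{n=0}^{\infty} e^{-(2n+1)^2\mu}$, which is exactly (\ref{eq:27}).

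For the bounds, let $f(x) = e^{-\mu(2x+1)^2}$. This function is strictly decreasing on $[0,\infty)$, so the standard integral test gives $\int_0^\infty f(x)\,{\rm d}x \leq \sum_{n=0}^\infty f(n) \leq f(0) + \int_0^\infty f(x)\,{\rm d}x$. The substitution $u = (2x+1)\sqrt{2\mu}$ evaluates the integral as $\frac{1}{2}\sqrt{\pi/\mu}\, Q(\sqrt{2\mu})$, which after re-expressing $\mu$ in terms of $\lambda d^2$ coincides with $E[N]^L_{\textrm{app}}$; that yields the lower bound directly. For the upper bound, I would use the identity $1 - 2Q(\sqrt{2\mu}) = {\rm erf}(\sqrt{\mu})$ together with the rescaling $u = y\sqrt{\mu}$ to obtain $E[N]^U_{\textrm{app}} - E[N]^L_{\textrm{app}} = \int_0^1 e^{-\mu y^2}\,{\rm d}y$. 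The elementary inequality $e^{-\mu y^2} \geq e^{-\mu}$ on $y \in [0,1]$ then gives $\int_0^1 e^{-\mu y^2}\,{\rm d}y \geq e^{-\mu} = f(0)$, so $\sum_{n=0}^\infty f(n) \leq f(0) + E[N]^L_{\textrm{app}} \leq E[N]^U_{\textrm{app}}$.

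The representation $\triangle N_{\textrm{app}}(\lambda d^2) = \int_0^1 e^{-\mu y^2}\,{\rm d}y$ then makes the remaining claims transparent: the integrand lies in $(0,1]$ with strict inequality on $(0,1]$ when $\mu>0$, so the integral lies in $(0,1)$; differentiating under the integral sign gives $\frac{d}{d\mu}\triangle N_{\textrm{app}} = -\int_0^1 y^2 e^{-\mu y^2}\,{\rm d}y < 0$, so $\triangle N_{\textrm{app}}$ is strictly \emph{decreasing} in $\lambda d^2$ (the word ``increasing'' in the statement appears to be a typo, since decreasing behaviour is what is consistent with the stated limits); and $\triangle N_{\textrm{app}} \to 1$ as $\lambda d^2 \to 0$ and $\triangle N_{\textrm{app}} \to 0$ as $\lambda d^2 \to \infty$ follow by monotone/dominated convergence on the integral representation. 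The main obstacle I anticipate is not any individual calculation but rather recognizing the clean identity $E[N]^U_{\textrm{app}} - E[N]^L_{\textrm{app}} = \int_0^1 e^{-\mu y^2}\,{\rm d}y$; without it, the bridge from the generic integral-test upper bound $f(0) + E[N]^L_{\textrm{app}}$ to the stated $E[N]^U_{\textrm{app}}$ is not obvious, and the stated $E[N]^U_{\textrm{app}}$ is in fact looser than $f(0) + E[N]^L_{\textrm{app}}$ for moderate and large $\mu$.
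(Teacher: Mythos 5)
Your proposal is correct, and for the first half it coincides with the paper's proof: both substitute $f_{\boldsymbol{X}_1}(r,\theta)=\lambda e^{-\lambda\pi r^2}$ from Lemma \ref{lem:1} into (\ref{eq:26}), telescope the resulting sum to get (\ref{eq:27}), and obtain $E[N]^L_{\textrm{app}}$ from the integral test exactly as you do. Where you diverge is in the second half. For the upper bound the paper simply extends the integration range by one unit to the left, i.e.\ uses $\sum_{n\geq 0}f(n)\leq \int_{-1}^{\infty}f(x)\,{\rm d}x$ for decreasing $f$, which evaluates \emph{directly} to the stated $E[N]^U_{\textrm{app}}$ (the $1-Q(\cdot)$ factor comes from the lower limit $-1$ mapping to $-\sqrt{2\mu}$ under your substitution); this avoids the detour through $f(0)+E[N]^L_{\textrm{app}}$ and the bridging inequality you need. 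Your route is valid but, as you yourself observe, proves a tighter bound first and then has to show the stated one dominates it. For the monotonicity and limits, the paper writes $\triangle N_{\textrm{app}}=\sqrt{\pi/2}\,(1-2Q(t))/t$ with $t=\sqrt{3\sqrt{3}\lambda d^2}$ and establishes strict decrease by introducing an auxiliary function $h(t)=\sqrt{2/\pi}\,t e^{-t^2/2}+2Q(t)-1$ and analyzing signs of derivatives; your integral representation $\triangle N_{\textrm{app}}=\int_0^1 e^{-\mu y^2}\,{\rm d}y$ with $\mu=\tfrac{3\sqrt{3}}{2}\lambda d^2$ is cleaner, makes the range $(0,1)$, the monotonicity, and both limits immediate, and doubles as the bridge you need for the upper bound, so it buys economy. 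Finally, you are right that the word ``increasing'' in the statement is a typo: the paper's own proof explicitly shows $\triangle N_{\textrm{app}}$ is strictly \emph{decreasing} in $\lambda d^2$, consistent with the stated limits, and your derivative computation confirms the same.
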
 
\begin{proof}
See Appendix \ref{proof:lem2}.
\end{proof}

Using Prop. \ref{pro:7}, the approximate handover rate can be computed as $H_{\textrm{app}} = \frac{1}{E[T_p]} \cdot E[N]_{\textrm{app}} $ and is bounded as $H^L_{\textrm{app}} \leq H_{\textrm{app}}  \leq  H^U_{\textrm{app}}$, where  $H^L_{\textrm{app}} = \frac{1}{E[T_p]} \cdot E[N]^L_{\textrm{app}},  H^U_{\textrm{app}} = \frac{1}{E[T_p]} \cdot E[N]^U_{\textrm{app}}  $.
As the size of the cells in cellular networks becomes smaller, higher handover rates are expected. In this regard, it is interesting to examine the asymptotic property of handover rate as  in Corollary \ref{cro:2}. 
\begin{cor}
 Assume that any of the following asymptotic conditions holds: 1) $d \to 0$ with fixed $\lambda$; 2) $\lambda \to 0$ with fixed $d$; 3) $\lambda d^2 \to 0$. Then the asymptotic approximate handover rate is given by
 \begin{equation}
 \setlength{\belowdisplayskip}{-15pt}
 {H}_{\textrm{app}} \sim \frac{1}{E[T]+E[S]} \sqrt{\frac{\pi}{6\sqrt{3}\lambda}} \frac{1}{2d}.
 \end{equation}
\label{cro:2}
\end{cor}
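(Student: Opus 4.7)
The plan is to derive the asymptotic form directly from the closed-form bounds in Proposition \ref{pro:7} and apply a squeeze argument. The first observation is that each of the three stated conditions (1) $d \to 0$ with $\lambda$ fixed, (2) $\lambda \to 0$ with $d$ fixed, and (3) $\lambda d^2 \to 0$, implies $\lambda d^2 \to 0$; condition (3) is already in this form, and (1) and (2) are special cases. So it suffices to analyze the asymptotics of $E[N]_{\textrm{app}}$ as $\lambda d^2 \to 0$ and then divide by $E[T_p] = E[T] + E[S]$.

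Next, I would rewrite both bounds from Proposition \ref{pro:7} in a convenient form. Factoring out the prefactor $\sqrt{\pi/(6\sqrt{3}\lambda d^2)} = \frac{1}{d}\sqrt{\pi/(6\sqrt{3}\lambda)}$, the lower and upper bounds become $\frac{1}{d}\sqrt{\pi/(6\sqrt{3}\lambda)} \cdot Q(\sqrt{3\sqrt{3}\lambda d^2})$ and $\frac{1}{d}\sqrt{\pi/(6\sqrt{3}\lambda)} \cdot \bigl(1 - Q(\sqrt{3\sqrt{3}\lambda d^2})\bigr)$, respectively. As $\lambda d^2 \to 0$, the argument $\sqrt{3\sqrt{3}\lambda d^2}$ of the $Q$-function tends to $0$, and by continuity $Q(\sqrt{3\sqrt{3}\lambda d^2}) \to Q(0) = 1/2$. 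Consequently, both $Q(\sqrt{3\sqrt{3}\lambda d^2})$ and $1 - Q(\sqrt{3\sqrt{3}\lambda d^2})$ converge to $1/2$, so the two bounds are each asymptotically equivalent to $\frac{1}{2d}\sqrt{\pi/(6\sqrt{3}\lambda)}$.

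Applying the sandwich inequality $E[N]^L_{\textrm{app}} \leq E[N]_{\textrm{app}} \leq E[N]^U_{\textrm{app}}$ from Proposition \ref{pro:7} then yields $E[N]_{\textrm{app}} \sim \frac{1}{2d}\sqrt{\pi/(6\sqrt{3}\lambda)}$ under any of the three conditions. Dividing by $E[T_p] = E[T] + E[S]$, which is deterministic with respect to $d$ (and well defined under $\lambda \to 0$ since the expression given in Section \ref{subsec:1} remains a valid, possibly unbounded, positive quantity), immediately produces the stated asymptotic formula for $H_{\textrm{app}}$.

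There is no real obstacle in this proof beyond checking continuity of $Q$ at $0$ and recognizing that the three conditions collapse to $\lambda d^2 \to 0$; the heavy lifting was already done in establishing the tight bounds of Proposition \ref{pro:7}. The only minor thing to be careful about is that the asymptotic statement is about the prefactor $\frac{1}{2d}\sqrt{\pi/(6\sqrt{3}\lambda)}$, not a numerical limit, since in cases (1) and (2) the quantity itself diverges; the squeeze argument should therefore be phrased in terms of the ratio $E[N]_{\textrm{app}} / \bigl(\frac{1}{2d}\sqrt{\pi/(6\sqrt{3}\lambda)}\bigr) \to 1$ rather than as a limit of differences.
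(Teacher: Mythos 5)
Your proof is correct and is essentially the argument the paper intends: the corollary is stated immediately after the bounds $H^L_{\textrm{app}} \leq H_{\textrm{app}} \leq H^U_{\textrm{app}}$ of Proposition \ref{pro:7} precisely so that it follows by the squeeze you describe, noting that all three conditions reduce to $\lambda d^2 \to 0$ and that $Q(\sqrt{3\sqrt{3}\lambda d^2}) \to 1/2$ makes both bounds asymptotic to $\frac{1}{2d}\sqrt{\pi/(6\sqrt{3}\lambda)}$. Your remark that the equivalence must be read as a ratio tending to $1$ (since the quantity itself diverges) is a correct and worthwhile clarification.
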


Though derivation by (\ref{eq:24}) is not tractable, the \textit{exact} handover rate in hexagonal model can be obtained using a generalized solution of Buffon's needle \cite{schroeder1974buffon} as in the following Proposition.
\begin{pro}
Let $d$ be the side length of the hexagonal cell and $\lambda$ the mobility parameter. 
The expected number of handovers $E[N]$ during one movement period  is given by
\begin{equation}
\setlength{\belowdisplayskip}{3pt}
E [ N ] = E[T] \cdot \frac{4 \sqrt{3}}{3\pi } \frac{E[V]}{d}.
\label{eq:3411}
\end{equation}
The handover rate $H =  {E[N]}/{E[T_p]} $ is then given by
\begin{equation}
H = \frac{E[T]}{E[T]+E[S]} \cdot \frac{4 \sqrt{3}}{3\pi } \frac{E[V]}{d}.
\end{equation}
\label{pro:77}
\end{pro}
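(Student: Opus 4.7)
The approach is to recast each handover as a crossing between the straight-line trajectory of length $L$ traversed during one movement period and an edge of the hexagonal tessellation, and then apply the generalized Buffon's needle (Cauchy--Crofton) formula of \cite{schroeder1974buffon}. The structural property that enables this reduction is that the movement direction is uniform on $[0, 2\pi]$ and independent of both $L$ and of the starting waypoint $\boldsymbol{X}_0$; this angular symmetry is precisely what makes the expected crossing count depend only on the total edge-length per unit area of the tessellation.

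First, I would compute the edge-length density of the hex grid: a regular hexagon of side $d$ has area $\frac{3\sqrt{3}}{2}d^2$ and perimeter $6d$, and each edge is shared by two adjacent cells, so the per-cell edge length is $3d$, yielding $\mu = \tfrac{2}{\sqrt{3}\,d}$. Second, by the generalized Buffon formula, conditional on $L = \ell$,
\begin{equation*}
E[N \mid L = \ell] \;=\; \frac{2\ell \mu}{\pi} \;=\; \frac{4\sqrt{3}}{3\pi d}\,\ell,
\end{equation*}
independently of $\boldsymbol{X}_0$, since the integration over the uniform direction $\theta \in [0, 2\pi]$ averages out any dependence on the phase of $\boldsymbol{X}_0$ relative to the grid. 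Third, taking the outer expectation over $L$ (independent of direction) yields $E[N] = \frac{4\sqrt{3}}{3\pi d}\,E[L]$, and writing $L = VT$ together with $E[L] = E[V]\,E[T]$ --- which holds exactly when $V$ is constant, the regime in which the present statement is implicitly cast --- produces (\ref{eq:3411}). The handover-rate formula then follows immediately from Definition \ref{defi:1} combined with $E[T_p] = E[T] + E[S]$.

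The main technical point --- and the step I expect to require the most care --- is verifying the Cauchy--Crofton identity when $\boldsymbol{X}_0$ is a fixed (rather than uniformly distributed) starting point. The cleanest route is to express $N(\boldsymbol{X}_0, \theta, \ell)$ as a sum of indicators of chord-edge crossings, swap the uniform angular integral with the arc-length integral along the edge set $\mathcal{E}$ via Fubini, and use the fact that the Jacobian factor $|\sin\alpha|$ --- where $\alpha$ is the local angle between the chord and an edge --- integrates against the uniform distribution on $[0, 2\pi]$ to produce the constant $2/\pi$. Because the hex grid has spatially uniform edge density $\mu$, the resulting value depends only on $\ell$ and $\mu$, independently of $\boldsymbol{X}_0$, which is exactly what legitimizes dropping the conditioning on the starting point at the end.
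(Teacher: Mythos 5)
Your proof is the same generalized Buffon's-needle argument the paper uses: both reduce the handover count to the product of the angular factor $2/\pi$ (from $E[|\sin\Theta|]$ with $\Theta$ uniform on $[0,2\pi]$) and the edge-length density $2/(\sqrt{3}\,d)$ of the hexagonal tiling, and both then convert path length into time. The paper phrases it as a crossing \emph{rate} per unit time over a large region $\mathcal{A}$ and multiplies by $E[T]$, whereas you condition on the transition length $L=\ell$ and integrate over $L$; these are equivalent bookkeepings. Your remark that $E[L]=E[V]\,E[T]$ holds exactly only for constant $V$ is in fact more careful than the paper, which writes $E[N]=E[H^{\star}]\cdot E[T]$ and commits the same factorization silently.

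The one claim that does not hold as stated is that $E[N\mid L=\ell,\ \boldsymbol{X}_0]=\frac{2\ell\mu}{\pi}$ for \emph{every fixed} $\boldsymbol{X}_0$, with the angular average alone washing out the phase of $\boldsymbol{X}_0$ relative to the grid. It does not: take $\boldsymbol{X}_0$ at a cell center and $\ell<\frac{\sqrt{3}}{2}d$ (less than the apothem); then the segment never leaves the cell for any direction, so $N\equiv 0$, while $\frac{2\ell\mu}{\pi}>0$. The Cauchy--Crofton/Fubini identity you invoke requires averaging over the \emph{position} of the needle as well as its direction; direction alone is not enough. The repair is exactly what the paper's $|\mathcal{A}|\to\infty$ spatial average accomplishes implicitly: average $\boldsymbol{X}_0$ uniformly over a fundamental domain of the tiling (justified because, over many movement periods on the infinite plane, the waypoint's phase relative to the lattice equidistributes), after which the per-unit-length crossing intensity is exactly $\frac{2\mu}{\pi}$ and the rest of your computation goes through unchanged.
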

\begin{proof}
See Appendix \ref{proof:pro77}.
\end{proof}

We remark that the insights obtained by either approximate or exact approach are the same. Let us consider the simplified RWP mobility model where the mobile nodes do not have pause time and constant velocity, i.e., $V \equiv \nu$, the asymptotic handover rate in Corollary \ref{cro:2} can be further simplified as
\begin{equation}
{H}_{\textrm{app}} \sim \sqrt{\frac{\pi}{6\sqrt{3}}} \frac{\nu}{d},
\label{eq:341}
\end{equation}
and the exact handover rate $H$ is given by
\begin{equation}
H = \frac{ 4  \sqrt{3} }{ 3 \pi } \frac{\nu}{d },
\end{equation}
which is consistent with the one given in \cite{casares2011mobility}.
Note that the hexagonal cell size $s_H$ is given by ${3\sqrt{3} d^2}/{2}$ in hexagonal tiling.  So (\ref{eq:341}) can be written as 
$
{H}_{\textrm{app}} \sim \sqrt{\frac{\pi}{4}} \frac{ \nu} {\sqrt{s_H}}
$.
Similarly, the exact method yields $ H =  \frac{4}{\pi} \sqrt{\frac{\sqrt{3}}{2}} \frac{\nu}{{\sqrt{s_H}}} $.
Both results imply that handover rate is inversely proportional to the square root of cell size $s_H$. In other words, if we deploy more small cells and increase the BS density say by 4 times in current cellular network, we would expect the handover rate to be roughly increased by 2 times. Interestingly, the mobility parameter $\lambda$ does not play a role, while the velocity and the cell size affect the handover rate in a trade-off manner. Fig. \ref{fig:44} compares the number of handovers obtained by simulation to the counterparts evaluated by analytic formula (\ref{eq:27}) and (\ref{eq:3411}) respectively. It is shown that the exact analytic result closely matches the simulation while the approximation approach tends to underestimate the real number of handovers.

\begin{figure}
\centering
\includegraphics[width=8cm]{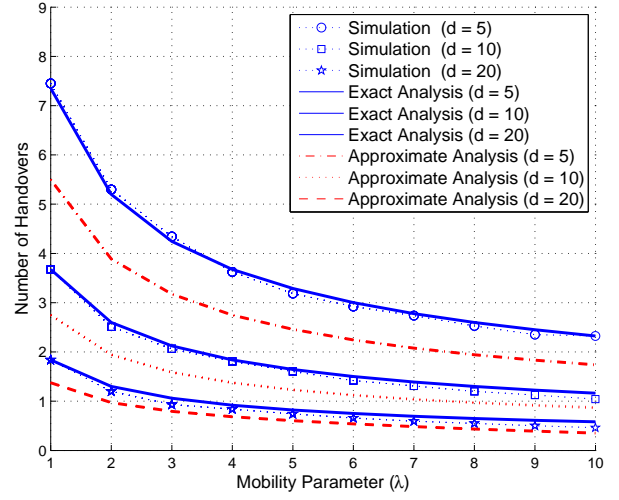}
\caption{Handover in hexagonal model  by analysis and simulation with velocity $\nu \equiv 1$ and no pause time. }
\label{fig:44}
\end{figure}

\subsection{Sojourn Time}

Sojourn time is defined as the expected duration that the mobile node stays within a particular serving cell. For brevity, we only consider the simplified RWP mobility model where the mobile nodes do not have pause time and constant velocity, i.e., $V \equiv \nu$. Then sojourn time in any cell with coverage area $\mathcal{C}$ can be computed as 
\begin{align}
S_T = E[T] \cdot \int_{\mathcal{C}} P({\rm d}A(r,\theta)),
\label{eq:38}
\end{align}
where $P(dA(r,\theta))$ is the probability distribution of the spatial node density $f (r,\theta)$ given in Theorem \ref{pro:6}. For brevity, we only focus on the sojourn time  in the cell where the connection is initiated during one movement period in the sequel.  

Assuming the mobile node is co-located with its currently associated BS at the origin, the sojourn time can be computed as 
\begin{align}
S_T 
=E[T] \cdot \frac{4\sqrt{\lambda}}{\pi}  ( &\int_{0}^{d/2} \int_{0}^{\sqrt{3}d/2} f(x,y)  {\rm d}y  {\rm d}x \notag \\
&+ \int_{d/2}^{d} \int_{0}^{-\sqrt{3}x + \sqrt{3}d } f(x,y)  {\rm d}y  {\rm d}x  ),
\label{eq:39}
\end{align}
where $f(x,y) = { \exp(- \lambda \pi (x^2 + y^2)) }/{ \sqrt{x^2 + y^2}}$. However, no closed form result is available for (\ref{eq:39}), though it can be evaluated using numerical methods. We provide explicit formulas for the lower and upper bounds of (\ref{eq:39}) in Prop. \ref{pro:10}, whose proof is omitted for brevity.
\begin{pro}
The sojourn time $S_T$ can be bounded as $S_T^L \leq S_T \leq S_T^U$ where
\begin{align}
S_T^L \triangleq  E[T] \cdot (1- 2Q (\sqrt{\frac{3}{2}\pi \lambda} d )   ),  \\  
S_T^U \triangleq E[T] \cdot (1- 2Q (\sqrt{2\pi \lambda} d ) ).
\label{eq:40}
\end{align}
\label{pro:10}
\end{pro}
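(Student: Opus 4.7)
The plan is to sandwich the regular hexagonal cell $\mathcal{C}$ between its inscribed and circumscribed disks and exploit the non-negativity of the integrand $f(x,y) = e^{-\lambda\pi(x^2+y^2)}/\sqrt{x^2+y^2}$. A regular hexagon of side length $d$ centered at the origin has in-radius (apothem) $\sqrt{3}d/2$ and circumradius $d$. Writing $D_R$ for the disk of radius $R$ centered at the origin, this gives the inclusion $D_{\sqrt{3}d/2} \subset \mathcal{C} \subset D_d$, and since $f \geq 0$, monotonicity of the integral in the region yields
\[
\int_{D_{\sqrt{3}d/2}} f \,\mathrm{d}A \;\le\; \int_{\mathcal{C}} f \,\mathrm{d}A \;\le\; \int_{D_{d}} f \,\mathrm{d}A .
\]
The quarter-cell integral appearing in (\ref{eq:39}) is exactly $1/4$ of the middle expression by the reflection symmetry of $\mathcal{C}$ and $f$ about both coordinate axes, so the two inequalities translate directly into lower and upper bounds on $S_T$.

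Next I would compute the generic disk integral in closed form. Passing to polar coordinates,
\[
\int_{D_R} f(x,y)\,\mathrm{d}A = \int_0^{2\pi}\!\!\int_0^R \frac{e^{-\lambda\pi r^2}}{r}\, r \,\mathrm{d}r\,\mathrm{d}\theta = 2\pi \int_0^R e^{-\lambda\pi r^2}\,\mathrm{d}r,
\]
and the substitution $u = r\sqrt{2\pi\lambda}$ reduces the remaining integral to a standard Gaussian tail, giving
\[
\int_{D_R} f \,\mathrm{d}A = \frac{\pi}{\sqrt{\lambda}}\bigl(1 - 2Q(R\sqrt{2\pi\lambda})\bigr) .
\]
Setting $R = \sqrt{3}d/2$ produces the argument $\sqrt{(3/2)\pi\lambda}\,d$ appearing in $S_T^L$, and $R = d$ produces $\sqrt{2\pi\lambda}\,d$ appearing in $S_T^U$.

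Finally, I would assemble the pieces: the prefactor $4\sqrt{\lambda}/\pi$ in (\ref{eq:39}), combined with the factor $1/4$ from reducing the quarter-cell integral to the full-cell integral, exactly cancels the $\pi/\sqrt{\lambda}$ produced by the disk computation, leaving $E[T]\bigl(1 - 2Q(\cdot)\bigr)$ in precisely the form stated in Prop.~\ref{pro:10}. The argument is essentially routine once the correct bounding disks are identified; the only real bookkeeping hazard is matching the quarter-versus-full-cell factor in (\ref{eq:39}) against the symmetry-reduced disk integrals so that the $Q$-arguments come out with exactly the constants claimed.
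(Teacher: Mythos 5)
Your proof is correct: the sandwich $D_{\sqrt{3}d/2}\subset\mathcal{C}\subset D_{d}$ together with the polar-coordinate evaluation of the disk integral gives exactly the stated $Q$-function arguments, and the factor bookkeeping ($\tfrac{4\sqrt{\lambda}}{\pi}\cdot\tfrac14\cdot\tfrac{\pi}{\sqrt{\lambda}}=1$) checks out. The paper omits its proof of Prop.~\ref{pro:10} "for brevity," but the inscribed/circumscribed-disk argument is evidently the intended one, since the bounds are precisely $E[T](1-2Q(\sqrt{2\pi\lambda}R))$ evaluated at the hexagon's in-radius and circum-radius.
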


Assuming that $E[T]$ is finite\footnote{This is true for both the case with constant velocity and the case with uniformly distributed velocity on $[v_{\min}, v_{\max}]$, as shown in Section \ref{subsec:1}.}, $S_T$ tends to $0$ since both the lower bound $S_T^L$ and upper bound $S_T^U$ tend to $0$ as $d \to 0$. This is an intuitive result. So cellular networks with small cells need to be equipped with fast algorithms. Otherwise, other handover strategies may be needed if the sojourn time in a cell is shorter than the time needed to complete the handover procedure. The analysis of the impact of mobility parameter $\lambda$ is more involved. To obtain insight, consider the constant velocity case which yields the following result.

\begin{pro}
Assume that $V \equiv \nu$. Then as $\lambda \to 0$, sojourn time $S_T \sim \alpha {d}/{\nu}$, where $\alpha \in (\frac{\sqrt{3}}{2},1)$ is a constant.
\label{pro:11}
\end{pro}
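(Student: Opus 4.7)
The plan is to extract the leading-order behavior of $S_T$ as $\lambda \to 0$ directly from the exact double-integral expression (\ref{eq:39}), and then invoke Prop.~\ref{pro:10} to pin the resulting constant $\alpha$ strictly inside $(\sqrt{3}/2,1)$.

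First I would substitute $V \equiv \nu$ so that $E[T] = 1/(2\nu\sqrt{\lambda})$ by (\ref{eq:15}), and rewrite (\ref{eq:39}) as
\begin{equation*}
S_T \;=\; \frac{1}{2\nu\sqrt{\lambda}} \cdot \frac{4\sqrt{\lambda}}{\pi}\, I(\lambda) \;=\; \frac{2}{\pi\nu}\, I(\lambda),
\end{equation*}
where $I(\lambda)$ denotes the Cartesian integral of $\exp(-\lambda\pi(x^2+y^2))/\sqrt{x^2+y^2}$ over the quarter-hexagon $\mathcal{Q}$. The key observation is that the singular factor $\sqrt{\lambda}$ coming from the pre-factor exactly cancels the factor $\sqrt{\lambda}$ attached to the spatial node density, so the $\lambda \to 0$ limit reduces to an exponential-free integral.

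Next I would show that $I(\lambda) \to I_0 := \int_{\mathcal{Q}} dx\,dy/\sqrt{x^2+y^2}$ as $\lambda \to 0$. This is a straightforward application of dominated convergence: the exponential is bounded by $1$ on the bounded domain $\mathcal{Q}$, the integrand converges pointwise to $1/\sqrt{x^2+y^2}$, and this limit function is integrable over $\mathcal{Q}$ since switching to polar coordinates absorbs the $1/r$ singularity at the origin. A dimensional rescaling $x = d\tilde{x}$, $y = d\tilde{y}$ then gives $I_0 = d\,\tilde{I}_0$, where $\tilde{I}_0$ is the same integral taken over the unit quarter-hexagon and is a finite positive constant independent of $d, \nu, \lambda$. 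Combining, $S_T \sim (2\tilde{I}_0/\pi)\,d/\nu$, so the limit exists and we may set $\alpha = 2\tilde{I}_0/\pi$.

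Finally, to confine $\alpha$ to the open interval $(\sqrt{3}/2,1)$, I would apply the sandwich from Prop.~\ref{pro:10}. Using the Taylor expansion $Q(x) = \tfrac{1}{2} - x/\sqrt{2\pi} + O(x^3)$ as $x \to 0$, one gets $1 - 2Q(\sqrt{3\pi\lambda/2}\,d) \sim \sqrt{3\lambda}\,d$ and $1 - 2Q(\sqrt{2\pi\lambda}\,d) \sim 2\sqrt{\lambda}\,d$. Multiplying by $E[T] = 1/(2\nu\sqrt{\lambda})$ shows $S_T^L \sim (\sqrt{3}/2)\,d/\nu$ and $S_T^U \sim d/\nu$, so $\sqrt{3}/2 \le \alpha \le 1$. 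The strict inequalities come from the geometric observation that the hexagon strictly contains its inscribed disk (radius $\sqrt{3}d/2$) and is strictly contained in its circumscribed disk (radius $d$), so that $I_0$ differs from the disk integrals by strictly positive amounts; the main (modest) obstacle is making this strictness argument precise, which can be done by noting that the indicator $\mathbf{1}_{\mathcal{Q}} - \mathbf{1}_{\text{inscribed}}$ (resp.\ $\mathbf{1}_{\text{circumscribed}} - \mathbf{1}_{\mathcal{Q}}$) has strictly positive Lebesgue measure on which $1/\sqrt{x^2+y^2}$ is strictly positive.
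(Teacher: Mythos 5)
Your proof is correct, and it takes a noticeably different (and in fact more careful) route than the paper's. The paper evaluates the limit of (\ref{eq:39}) by invoking the mean value theorem for integrals to pull a factor $f(\xi_x d,\xi_y d)$ out of the double integral, writes $S_T$ as $\frac{3\sqrt{3}}{2\pi}\frac{\exp(-\lambda\pi(\xi_x^2+\xi_y^2)d^2)}{\sqrt{\xi_x^2+\xi_y^2}}\frac{d}{\nu}$, and then lets $\lambda\to 0$; it identifies $\alpha=\frac{3\sqrt{3}}{2\pi}(\xi_x^2+\xi_y^2)^{-1/2}$ and locates it in $(\sqrt{3}/2,1)$ by taking the $\lambda\to 0$ limits of the bounds in Prop.~\ref{pro:10}. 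That argument is slightly delicate on two counts: the mean-value point $(\xi_x,\xi_y)$ a priori depends on $\lambda$ (and the integrand $1/\sqrt{x^2+y^2}$ is unbounded at the corner of the domain), and passing strict inequalities $S_T^L<S_T<S_T^U$ to the limit only yields non-strict bounds on $\alpha$. Your dominated-convergence argument sidesteps both issues: the cancellation of $\sqrt{\lambda}$ between $E[T]$ and the node density gives an explicit, $\lambda$-free limit $\alpha=2\tilde{I}_0/\pi$ with $\tilde{I}_0=\int_{\tilde{\mathcal{Q}}}(x^2+y^2)^{-1/2}\,{\rm d}x\,{\rm d}y$ over the unit quarter-hexagon, and the strictness of $\sqrt{3}/2<\alpha<1$ is then a genuine consequence of the inscribed/circumscribed quarter-disks differing from $\tilde{\mathcal{Q}}$ on sets of positive measure where the integrand is positive. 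What the paper's route buys is brevity; what yours buys is a clean identification of the constant and an airtight strictness argument. One cosmetic remark: since you compare $\tilde{I}_0$ directly to the two disk integrals, the appeal to Prop.~\ref{pro:10} is not strictly needed, though it is a harmless sanity check.
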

\begin{proof}
See Appendix \ref{proof:pro11}.
\end{proof}

From Prop. \ref{pro:11}, we observe the interesting fact that sojourn time converges to $\alpha d/\nu$ as the mobility parameter $\lambda$ goes to zero. Then what matters is the velocity: Sojourn time is inversely proportional to the velocity. Also, we can express $S_T$ as
$S_T \sim \sqrt{ \frac{2}{3\sqrt{3}}  } \frac{1}{\nu} \cdot \sqrt{s_H}   $ where recall that $s_H$ is the hexagonal cell size. This shows that sojourn time is proportional to the square root of cell size, which is contrary to the asymptotic result for handover rate.

\section{Applications to Poisson-Voronoi Modeled Cellular Networks}
\label{sec:app2}

In this subsection, we apply the proposed RWP mobility model to analyze handover rate and sojourn time in cellular networks modeled by Poisson-Voronoi tessellation \cite{moller1994lectures, baccelli2009stochastic}. We  first give a brief description on Poisson-Voronoi tessellation. Consider a locally finite set $\phi = \{ \boldsymbol{x}_i \}$ of points $\boldsymbol{x}_i \in \mathbb{R}^2$, referred to as nuclei. The Voronoi cell $\mathcal{C}_{\boldsymbol{x}_i} (\phi)$ of point $\boldsymbol{x}_i$ with respect to $\phi$ is defined as
\begin{align}
\mathcal{C}_{\boldsymbol{x}_i} (\phi) = \{ y \in \mathbb{R}^2:  \parallel \boldsymbol{y} - \boldsymbol{x}_i \parallel_2 \ \leq \  \parallel \boldsymbol{y} - \boldsymbol{x}_j \parallel_2, \forall x_j \in \phi  \} .   \notag
\label{eq:44}
\end{align}
Let $\epsilon_{\boldsymbol{x}}$ be the Dirac measure at $\boldsymbol{x}$, i.e., for $A \in \mathbb{R}^2$, $\epsilon_{\boldsymbol{x}} (A) = 1$ if $\boldsymbol{x} \in A$, and $0$ otherwise.
Then the spatial point process $\Phi$ can be written as $\Phi = \sum_i \epsilon_{\boldsymbol{x}_i}$, and the Poisson-Voronoi tessellation is defined as follows \cite{moller1994lectures}.

\begin{defi}
For a spatial Poisson point process $\Phi = \sum_i \epsilon_{\boldsymbol{x}_i}$ on $\mathbb{R}^2$, the union of the associated Voronoi cells, i.e., $\mathcal{V} (\Phi) = \bigcup_{\boldsymbol{x}_i \in \Phi} \mathcal{C}_{\boldsymbol{x}_i} (\Phi)
$, is called Poisson-Voronoi tessellation.
\end{defi}

In cellular networks modeled by a Poisson-Voronoi tessellation, the BSs are the nuclei distributed according to some PPP $\Phi$ in $\mathbb{R}^2$. Besides, each BS $\boldsymbol{x}_i$ serves mobile users which are located within its Voronoi cell $\mathcal{C}_{\boldsymbol{x}_i} (\Phi)$. The latter assumption is equivalent to the hypothesis of the nearest BS association strategy. In the sequel, we also assume that the PPP $\Phi$ modeling the BS positions is homogeneous and its intensity is denoted by $\mu$.

\subsection{Handover Rate}

Assume that the mobile node is located at the origin and let $\boldsymbol{X}_0 $ and $\boldsymbol{X}_1$ be two successive waypoints. Conditioned on the position of $\boldsymbol{X}_1$ and a given realization of the Poisson-Voronoi tessellation, the number of handovers equals the number of intersections of the segment $[\boldsymbol{X}_0,\boldsymbol{X}_1]$ and the boundary of the Poisson-Voronoi tessellation. Then  we can obtain the expected number of handovers by averaging over the spatial distribution of $\boldsymbol{X}_1$ and the distribution of Poisson-Voronoi tessellation. This is the main idea used in proving Prop. \ref{pro:12}. 

\begin{pro}
Let $\mu$ be the intensity of the homogeneously PPP distributed BSs and $\lambda$ the mobility parameter. The expected number of handovers $E[N]$ during one movement period  is given by
\begin{equation}
\setlength{\belowdisplayskip}{3pt}
E [ N ] = \frac{2}{\pi} \sqrt{ \frac{\mu}{\lambda} }.
\label{eq:52}
\end{equation}
The handover rate $H =  {E[N]}/{E[T_p]} $ is then given by
\begin{equation}
H = \frac{1}{E[T]+E[S]} \frac{2}{\pi} \sqrt{ \frac{\mu}{\lambda} }.
\end{equation}
\label{pro:12}
\end{pro}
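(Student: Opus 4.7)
The plan is to condition on the segment $[\boldsymbol{X}_0,\boldsymbol{X}_1]$ and count its intersections with the edge skeleton of the Poisson-Voronoi tessellation. Since $\Phi$ is stationary and independent of the waypoint process, I may place $\boldsymbol{X}_0$ at the origin, and by isotropy of $\Phi$ the count depends only on the segment length $L = \|\boldsymbol{X}_1-\boldsymbol{X}_0\|$. With probability one the segment avoids the tessellation's vertices and is not tangent to any edge, so each handover along the path corresponds to exactly one edge intersection; hence $N$ equals the number of intersections of a straight segment of length $L$ with the edge skeleton of $\mathcal{V}(\Phi)$.

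Next I invoke the standard integral-geometric identity for stationary isotropic curve systems: if the edges have total length density $\gamma$ (length per unit area), the expected number of intersections with any fixed segment of length $\ell$ equals $\frac{2\gamma}{\pi}\,\ell$, which is the Cauchy-Crofton formula applied to the edge process. For a planar Poisson-Voronoi tessellation with nucleus intensity $\mu$, the edge-length density is well known to be $\gamma = 2\sqrt{\mu}$ (see, e.g., \cite{moller1994lectures}, which sits on a short computation: mean $6$ edges per cell, each shared by two cells, and mean edge length $2/(3\sqrt{\mu})$). Combining gives
\begin{equation*}
E[N \mid L] \;=\; \frac{4\sqrt{\mu}}{\pi}\, L.
\end{equation*}

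Finally I average over $L$, which is Rayleigh with $E[L] = 1/(2\sqrt{\lambda})$, to obtain
\begin{equation*}
E[N] \;=\; \frac{4\sqrt{\mu}}{\pi}\cdot\frac{1}{2\sqrt{\lambda}} \;=\; \frac{2}{\pi}\sqrt{\frac{\mu}{\lambda}},
\end{equation*}
and the formula for $H$ follows by dividing by $E[T_p]=E[T]+E[S]$. The main obstacle is justifying the two ingredients used above cleanly: the edge-length density $2\sqrt{\mu}$ and the Cauchy-Crofton reduction in the stationary random setting (rather than for deterministic curves). A self-contained alternative that bypasses both is a direct Palm/Campbell argument: parametrize the segment by arc length, write $E[N\mid L]=\int_0^L \rho\,ds$ where the crossing intensity $\rho$ is constant by stationarity, and evaluate $\rho$ by computing the infinitesimal probability that the nearest point of $\Phi$ changes over a length element $ds$, which reduces to a two-point nearest-neighbor comparison under $\Phi$ and recovers $\rho = 4\sqrt{\mu}/\pi$ without any appeal to global tessellation statistics.
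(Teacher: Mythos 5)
Your proposal is correct, and it takes the route that the paper explicitly mentions but does not write out. The paper's displayed proof is a Palm-calculus argument: it defines the point process of crossings as the $0$-facets of the sectional tessellation $\mathcal{V}_L(\Phi)$ (via centroids of $1$-facets intersected with the line $L$), quotes the intensity $\mu_0^L = 4\sqrt{\mu}/\pi$ from M{\o}ller, obtains $E[N\mid \boldsymbol{X}_1=(r,\theta)] = \frac{4r\sqrt{\mu}}{\pi}$, and then integrates against the waypoint density $\lambda e^{-\lambda\pi r^2}$. You instead get the same crossing intensity per unit length by combining the Cauchy--Crofton intersection formula for stationary isotropic fibre processes, $\frac{2\gamma}{\pi}\ell$, with the edge-length density $\gamma = 2\sqrt{\mu}$ of the Poisson--Voronoi tessellation, and then average over the Rayleigh length $L$ with $E[L]=1/(2\sqrt{\lambda})$ -- which is exactly the integral the paper computes in polar coordinates. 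All your constants check out ($3\mu$ edges per unit area, mean edge length $2/(3\sqrt{\mu})$, hence $\gamma=2\sqrt{\mu}$ and crossing intensity $4\sqrt{\mu}/\pi$, matching $\mu_0^L$). Your version is essentially the generalized Buffon's needle argument the paper uses for the hexagonal case (Prop.~\ref{pro:77}) transplanted to the random tessellation, and the paper itself remarks that Prop.~\ref{pro:12} ``can be proved by following the proof of Prop.~\ref{pro:77}.'' What each approach buys: yours is shorter and more transparent, at the price of invoking two textbook facts (the Crofton intersection formula for random fibre processes and the edge-length density); the paper's facet/centroid construction is heavier notationally but makes precise what point process is being counted and why its intensity is the quoted $4\sqrt{\mu}/\pi$, without separately justifying the Crofton reduction in the random setting. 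Your closing sketch of a direct Campbell-type evaluation of the crossing intensity $\rho$ is a reasonable way to make the argument self-contained, though as written it remains a sketch rather than a proof.
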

\begin{proof}
See Appendix \ref{proof:pro12}.
\end{proof}

If we assume no pause time and constant velocity $\nu$, $H$ in Prop. \ref{pro:12} can be simplified as 
\begin{equation}
H = \frac{4}{\pi} \nu \sqrt{ \mu },
\end{equation}
which is consistent with the one given in \cite{baccelli1999poisson}. In Poisson-Voronoi tessellation with the nuclei being homogeneous PPP $\Phi$ of intensity $\mu$, the expected value of the size $s_P$  of a typical Voronoi cell is given by $s_P = E[ | C_o (\Phi) |_2 ] = {1}/{\mu}$ \cite{stoyan1995stochastic},
where $C_o (\Phi)$ is the typical cell and $| C_o (\Phi) |_2$ denotes the area of $C_o (\Phi)$. So if we assume no pause time and constant velocity $\nu$, $H$ in Prop. \ref{pro:12} can be simplified as 
$
\displaystyle {H}= \frac{4}{\pi} \nu  /{\sqrt{s_P}}, 
$
which implies that the handover rate is inversely proportional to the square root of the cell size. This is consistent with the results in the hexagonal model.

Fig. \ref{fig:14} illustrates that the analytical result (\ref{eq:52}) matches the simulation result quite well. Also, we compare the handover rate of Poisson-Voronoi model, exact and approximate handover rate of hexagonal model in Fig. \ref{fig:201} as a function BS intensity. They all indicate that handover rate grows linearly with the square root of the BS's intensity $\sqrt{\mu}$. We further evaluate the three types of analytic results on handover, i.e., Prop. \ref{pro:7}, Prop. \ref{pro:77} and \ref{pro:12} by simulating the proposed RWP mobility model using the real-world data of macro-BS deployment in a cellular network, provided by a major service provider. Recall that we assume each BS serves the mobile users located within its Voronoi cell. A handover occurs when the mobile crosses the cell boundary. Thus, only the BS location data are relevant in the simulation. There are 400 BSs which are distributed in a relatively flat urban area. These 400 BSs roughly occupy a 105$\times$90 km area. We normalize the network size to be 1$\times$1. So $\mu = 400$ in the Poisson-Voronoi model, while the side length $d$ in hexagonal model is determined through $ 3 \sqrt{3} d^2 / 2 = 1/400$. The results are shown in Fig. \ref{fig:20}. It can be seen that the Poisson-Voronoi model is about as accurate as the hexagonal model in predicting the number of handovers. Meanwhile, the approximate analytic result underestimates the number of handovers. 

\begin{figure}
\centering
\includegraphics[width=8cm]{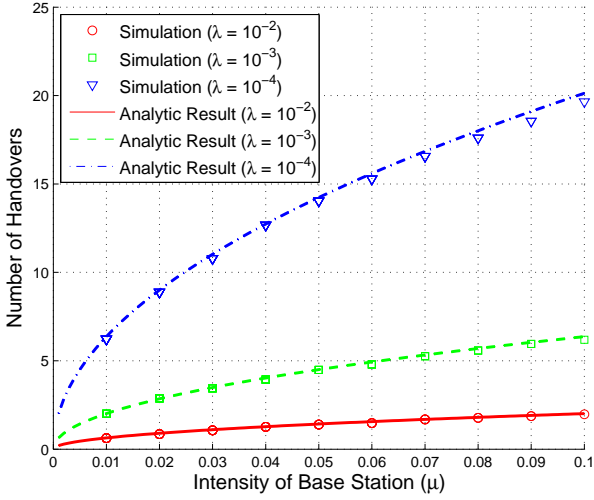}
\caption{Handover in Poisson-Voronoi model by analysis and simulation with velocity $\nu \equiv 1$ and no pause time. The number of handovers is proportional to the square root of the BS's intensity $\mu$, and is inversely proportional to mobility parameter $\lambda$.}
\label{fig:14}
\end{figure}

\begin{figure}
\centering
\includegraphics[width=8cm]{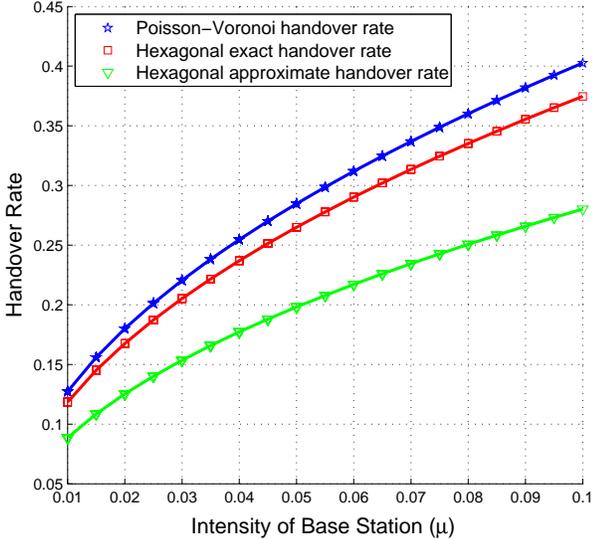}
\caption{Comparison of handover rate in Poisson-Voronoi model, exact and approximate handover rate in hexagonal model  with velocity $\nu \equiv 1$, normalized BS's intensity $\mu = 1$, mobility parameter $\lambda=1$ and no pause time. They all imply that the handover rate grows linearly with $\sqrt{\mu}$.}
\label{fig:201}
\end{figure}

\begin{figure}
\centering
\includegraphics[width=8cm]{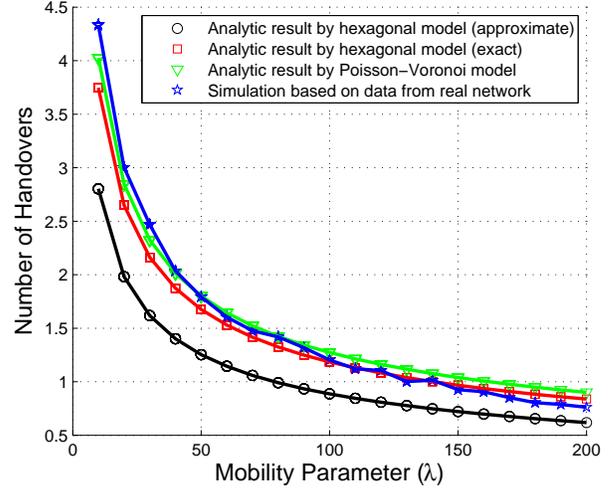}
\caption{Evaluation of handover by an actual macro-BS deployment with velocity $\nu \equiv 1$, no pause time and normalized BS's intensity $\mu = 400$. Poisson-Voronoi model is about as accurate in terms of handover evaluation as hexagonal model.}
\label{fig:20}
\end{figure}

\subsection{Sojourn Time}

Unlike the hexagonal model, (\ref{eq:38}) is not sufficient for computing the sojourn time in Poisson-Voronoi tessellation modeled cellular networks. Indeed, assuming that the mobile user is located at the origin, the sojourn time in the cell involves an additional source of randomness - the Poisson-Voronoi tessellation. So even averaging over the spatial node distribution sojourn time is still a random variable. In the sequel, we aim to characterize the probability distribution of this sojourn time. To this end, we first introduce the concept of contact distribution. Consider a random closed set $\mathcal{Z}$ and a convex compact test set $\mathcal{B}$ containing the origin $o$. Then the  contact distribution is defined as \cite{stoyan1995stochastic}
\begin{align}
H_{\mathcal{B}} (r) = P(  \mathcal{Z} \cap r \mathcal{B} \neq \emptyset |  o \notin \mathcal{Z}), r \geq 0.
\end{align}
In this paper, we are particularly interested in a special case: the linear contact distribution function $H_{l} (r)$ with the test set $\mathcal{B}$ being a segment of unit length $l$. Since Poisson-Voronoi tessellation is isotropic, the orientation of this segment $l$ is not important. For Poisson-Voronoi tessellation modeled cellular network, the random closed set $\mathcal{Z}$ of interest is the union of all cell boundaries $\cup_{\boldsymbol{x}_i \in \Phi} \partial \mathcal{C}_{\boldsymbol{x}_i} (\Phi) $, where $\partial \mathcal{C}_{\boldsymbol{x}_i} (\Phi)$ denotes the boundary of the cell $\mathcal{C}_{\boldsymbol{x}_i} (\Phi)$.

With a slight abuse of notation, let $\mathcal{C}_{o} (\Phi)$ denote the Voronoi cell containing the origin $o$. We can further simplify $H_{l} (r)$ as follows:
\begin{align}
H_{l} (r) = 1 - \frac{ P ( \mathcal{Z} \cap r l = \emptyset ) }{ 1 - P( o \in \mathcal{Z} )  } =  1 - P(  r l \subseteq \mathcal{C}_{o} (\Phi) ).
\label{eq:55}
\end{align}
The last equality follows because 1) the origin $o$ is contained in the interior of $\mathcal{C}_{o} (\Phi)$ almost surely and thus $P( o \in \mathcal{Z} ) = 0 $ where $\mathcal{Z} = \cup_{\boldsymbol{x}_i \in \Phi} \partial \mathcal{C}_{\boldsymbol{x}_i} (\Phi) $, and 2) the event $\{\mathcal{Z} \cap r l = \emptyset\}$ is equivalent to that $\{r l \subseteq \mathcal{C}_{o} (\Phi)\}$. Now we are in a position to characterize the sojourn time $S_T$.
\begin{pro}
Let $\mu$ be the intensity of the homogeneously PPP distributed BSs and $\lambda$ the mobility parameter. The pdf of sojourn time $S_T$ is given by
\begin{align}
f_{S_T}(t) &=\frac{1 }{2 \sqrt{\lambda} E[T]} \exp\left( \frac{1}{2} ( Q^{(-1)} ( \frac{1}{2}(1 - \frac{t}{E[T]}) ) )^2 \right)  h_l \left(  x \right),
\label{eq:577}
\end{align}
where $E[T]$ is the expected transition time, $x=\frac{1}{\sqrt{2\pi \lambda}}  Q^{(-1)} ( \frac{1}{2}(1 - \frac{t}{E[T]}) )$ and $h_l (r)$ is given by
\begin{align}
h_l (r) = 4 \pi \mu^2 \int_0^\pi & \int_0^{\pi - \alpha} r^3 \frac{  \sin^2\alpha \sin \beta }{ \sin^4 (\alpha + \beta) } b_0(\beta) \notag \\
& \exp( - \mu V_2 (r,\alpha, \beta) )  {\rm d}\beta  {\rm d}\alpha , r \geq 0,
\label{eq:56}
\end{align}
where 
$
V_2 (r,\alpha, \beta) = \pi   \rho^2 ( a_0 (\alpha) + a_1 (\alpha)  ) + \pi ( r^2 + \rho^2 - 2r\rho \cos \alpha ) (  a_0 ( \beta   ) + a_1 (  \beta  ) )  
$
with $\rho = \frac{r \sin \beta}{\sin(\alpha + \beta)}, a_0 (\theta) = 1 - \frac{\theta}{\pi}$, and $a_1 (\theta) = \frac{\sin 2 \theta}{2\pi}$, and $b_0 (\beta) = \frac{  (\pi-\beta) \cos \beta  + \sin \beta }{ \pi }$.
\label{pro:13}
\end{pro}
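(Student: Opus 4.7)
My plan is to reduce the problem to the distribution of the distance $R$ from the origin to the first point where the ray in the direction of motion meets the boundary of the serving Voronoi cell $\mathcal{C}_o(\Phi)$. By the isotropy of the homogeneous PPP $\Phi$, I take this direction to be the positive $x$-axis without loss of generality. During one movement period the mobile covers a Rayleigh-distributed distance $L$ at velocity $V$ independent of $L$, and thus remains in $\mathcal{C}_o(\Phi)$ for a distance $\min(L, R)$ (using convexity of Voronoi cells, there is no re-entry). Using $E[T] = E[L]\,E[1/V] = E[1/V]/(2\sqrt{\lambda})$ together with $\int_0^R e^{-\lambda \pi t^2}\, dt = \tfrac{1}{2\sqrt{\lambda}}(1 - 2Q(R\sqrt{2\pi\lambda}))$, I obtain, conditional on $R$,
\begin{equation}
S_T \;=\; E\!\left[\frac{\min(L,R)}{V} \,\Big|\, R\right] \;=\; E[T]\,\bigl(1 - 2Q(R\sqrt{2\pi\lambda})\bigr),
\end{equation}
so $S_T$ is a strictly increasing deterministic function of the random variable $R$.

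The next step identifies the law of $R$. By the definition (\ref{eq:55}), $P(R > r) = P(rl \subseteq \mathcal{C}_o(\Phi)) = 1 - H_l(r)$, so $R$ has pdf $h_l(r)$. To make this explicit, I use that $rl \subseteq \mathcal{C}_o(\Phi)$ is equivalent to the origin $o$ and the endpoint $p = (r,0)$ having the same nearest nucleus; applying the Slivnyak--Mecke formula to the PPP gives
\begin{equation}
P(R > r) \;=\; \int_{\mathbb{R}^2} \mu \exp\bigl(-\mu\,|D(o, |\xi|) \cup D(p, |\xi - p|)|\bigr)\, d\xi,
\end{equation}
where $D(x, s)$ is the closed disk of radius $s$ about $x$. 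I then parameterize $\xi$ through the base angles $\alpha$ at $o$ and $\beta$ at $p$ of the triangle $o\xi p$. The law of sines yields $|o - \xi| = \rho = r \sin\beta/\sin(\alpha+\beta)$ and $|p - \xi| = r \sin\alpha/\sin(\alpha+\beta)$, matching the auxiliary quantities in Proposition \ref{pro:13}. The Jacobian of $\xi \mapsto (\alpha,\beta)$ supplies a factor $r^3 \sin\alpha \sin\beta/\sin^3(\alpha+\beta)$, and decomposing the disk-union area as two disks minus their lens-shaped intersection (the latter split into two circular segments) gives the exponent $\mu V_2(r,\alpha,\beta)$, with the ``sector minus triangle'' structure producing $a_0(\theta) = 1 - \theta/\pi$ and $a_1(\theta) = \sin(2\theta)/(2\pi)$. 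Differentiating $-dP(R>r)/dr$ introduces the extra $\sin\alpha/\sin(\alpha+\beta)$ and the boundary-chord term $b_0(\beta)$ together with the second factor of $\mu$, reproducing (\ref{eq:56}).

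The proof is then completed by a direct change of variables. Since $r \mapsto E[T](1 - 2Q(r\sqrt{2\pi\lambda}))$ is strictly increasing on $[0, \infty)$, it can be inverted to give $r(t) = Q^{-1}(\tfrac12 (1 - t/E[T]))/\sqrt{2\pi\lambda}$. Using $Q'(x) = -e^{-x^2/2}/\sqrt{2\pi}$ I get $dt/dr = 2 E[T] \sqrt{\lambda}\, e^{-\pi\lambda r^2}$, so $|dr/dt| = e^{\pi\lambda r^2}/(2\sqrt{\lambda}\,E[T])$. Substituting $\pi\lambda r(t)^2 = \tfrac12 (Q^{-1}(\tfrac12(1 - t/E[T])))^2$ and applying $f_{S_T}(t) = h_l(r(t))\, |dr/dt|$ yields exactly (\ref{eq:577}).

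The main obstacle is the explicit derivation of $h_l(r)$ carried out in the second paragraph: while the Slivnyak--Mecke setup is standard, obtaining a clean closed form requires the right triangle-angle parameterization, an explicit expression for the area of the union of two overlapping disks in terms of these angles (which is where $a_0$ and $a_1$ come from), and attentive differentiation in $r$ so that the boundary contribution $b_0(\beta)$ emerges. Everything else---the identification of $S_T$ as a function of $R$ and the final change of variables---is then routine.
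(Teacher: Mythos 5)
Your argument is correct and follows essentially the same route as the paper: express the sojourn time as the strictly increasing deterministic map $r \mapsto E[T](1-2Q(\sqrt{2\pi\lambda}\,r))$ of the linear contact distance $R$ (whose density is $h_l$ via the identity $H_l(r)=1-P(rl\subseteq \mathcal{C}_o(\Phi))$), then change variables to get $f_{S_T}$. The only difference is that the paper simply quotes the closed form of $h_l(r)$ from \cite{gilbert1962random}, whereas you sketch its Slivnyak--Mecke/triangle-parameterization derivation; that extra material is not needed for the proposition and your sketch of it, while plausible, is not carried out in full.
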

\begin{proof}
By Theorem $\ref{pro:6}$, we have
\begin{align}
S_T (r) &= E[T] \int_0^{2\pi} \int_0^{r} \frac{\sqrt{\lambda}}{\pi x} \exp (-\lambda \pi x^2 )   x  {\rm d}x  {\rm d} \alpha \notag \\
&= E[T] \cdot (1 - 2 Q ( \sqrt{2\pi \lambda} r  ) ). \notag
\end{align}
Clearly, $F_{S_T}(t) = 0$ if $t<0$, and $F_{S_T}(t) = 1$ if $t\geq E[T]$. If $0 \leq t < E[T]$,
\begin{align}
F_{S_T}(t) &= P (S_T \leq t ) = P (E[T] \cdot (1 - 2 Q ( \sqrt{2\pi \lambda} r  ) ) \leq t ) \notag \\
&= P \left( 0 \leq r \leq \frac{1}{\sqrt{2\pi \lambda}}  Q^{(-1)} ( \frac{1}{2}(1 - \frac{t}{E[T]}) )  \right)  \notag \\
&= H_l \left( \frac{1}{\sqrt{2\pi \lambda}}  Q^{(-1)} ( \frac{1}{2}(1 - \frac{t}{E[T]}) ) \right)  \notag \\
&= \int^{\frac{1}{\sqrt{2\pi \lambda}}  Q^{(-1)} ( \frac{1}{2}(1 - \frac{t}{E[T]}) )}_{-\infty} h_l (r)  {\rm d}r. \notag
\end{align}
Taking the derivative with respect to $t$ for both sides yields (\ref{eq:577}). The closed form expression  of $h_l (r)$ in (\ref{eq:56}) has been given in \cite{gilbert1962random}. This completes the proof.
\end{proof}

Using the previous results for expected transition time, the pdf of the sojourn time under different velocity distributions can be derived from Prop. \ref{pro:13}. Note that the distribution of the sojourn time  is instrumental in studying the ping-pong behavior for mobility enhancement in cellular networks \cite{3gppMobility}. Though a closed form expression is not available, the pdf of the sojourn time given in Proposition \ref{pro:13} only involves a double integral, which is tractable for efficient numerical evaluation. We plot the pdf of sojourn time in Fig. \ref{fig:11} for illustration. As expected, the smaller $\mu$ is, the more likely the mobile node stays longer within the cell. This is intuitive since a smaller $\mu$ implies larger cell sizes on average. As a result, it is less likely that the mobile node would move out of the cell.

\begin{figure}
\centering
\includegraphics[width=8cm]{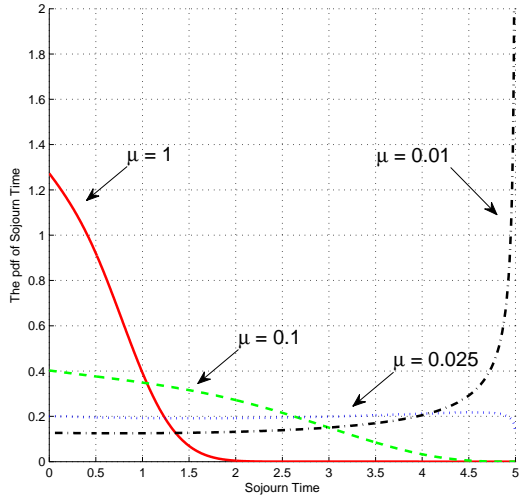}
\caption{Plot of pdf of sojourn time in Poisson-Voronoi model with mobility parameter $\lambda= 0.01$ and velocity $\nu \equiv 1$.}
\label{fig:11}
\end{figure}

We also compare the analytic result about sojourn time in the Poisson-Voronoi model to its deterministic counterpart in the hexagonal model in Fig. \ref{fig:21}. It is shown that the analytic result in the Poisson-Voronoi model is more conservative and yields smaller mean sojourn times than its counterpart in the hexagonal model. Besides, we can see that the upper and lower bounds of the sojourn time in hexagonal model are pretty tight.

\begin{figure}
\centering
\includegraphics[width=8cm]{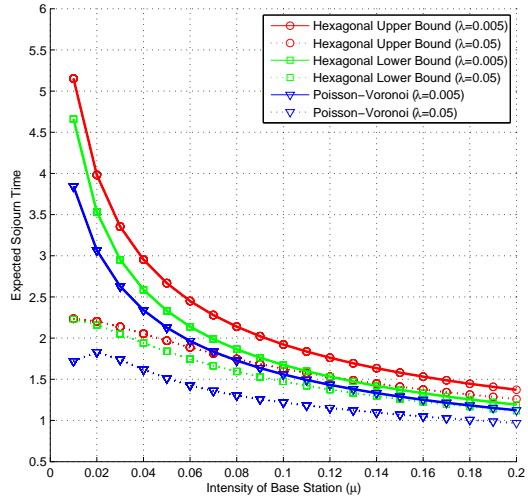}
\caption{Comparison of sojourn time in hexagonal tiling and Poisson-Voronoi tessellation with velocity $\nu \equiv 1$. Hexagonal model yields larger sojourn time than that of Poisson-Voronoi model. Besides, the upper and lower bounds for hexagonal model are tight.}
\label{fig:21}
\end{figure}

\section{Conclusions and Future Work}

In this paper, we study the critical mobility issue in cellular networks. To this end, we first propose a tractable RWP mobility model defined on the entire plane. The various properties of the mobility model are carefully studied and simple analytical expressions are obtained. Then we utilize this tractable mobility model to analyze the handover rate and sojourn time in cellular networks. The analysis is carried out for cellular networks under both hexagonal and Poisson-Voronoi models. We derive closed form expressions and/or bounds for the performance metrics in question. These analytical results are instrumental for mobility management in cellular networks.

Note that the proposed RWP mobility model represents the real movement of mobile nodes in a simplified manner. Thus, it does not capture some other mobility characteristics such as temporal and spatial dependency of the mobility pattern. It is desirable to extend the current model further to incorporate these extra mobility characteristics while maintaining a certain degree of tractability. It would be rather interesting (and challenging) to characterize the intuitive tradeoff between the complexity and tractability of the mobility models. Besides, it is also interesting to evaluate the performance of the many wireless protocols by applying the proposed tractable model in theory and/or simulation.

\section*{Acknowledgment}

The authors would like to thank the anonymous reviewers
for their valuable comments and suggestions, which helped the authors significantly improve the quality of the paper.

\appendix

\subsection{Proof of Theorem \ref{pro:6}}
\label{proof:pro6}

We first derive the pdf of the random waypoint $\boldsymbol{X}_1 = (R_1, \Theta_1)$ as follows:
\begin{align}
f_{\boldsymbol{X}_1}(r,\theta) &= \lim_{\Delta r \to 0} \frac{P(R_1 \leq r + \Delta r) - P(R_1 \leq r) }{  \int_{0}^{2 \pi} \int_{r}^{r + \Delta r}  x  {\rm d}x    {\rm d} \phi }  \notag \\
 &= \lim_{\Delta r \to 0} \frac{ \exp(- \lambda \pi r^2)  -\exp(- \lambda \pi (r + \Delta r)^2) }{2  \pi r \Delta r } \notag  \\
&= \lim_{\Delta r \to 0} \frac{ 2 \pi \lambda (r + \Delta r) \exp(- \lambda \pi (r + \Delta r)^2) }{2  \pi r  }  \notag \\
&= \lambda \exp(-\lambda_i \pi r^2). \notag
\end{align}

This result is summarized in Lemma \ref{lem:1} for ease of reference.
\begin{lem}
Given $\boldsymbol{X}_0$ is at the origin, the pdf of the random waypoint $\boldsymbol{X}_1 = (R_1, \Theta_1)$ is given by
\begin{equation}
f_{\boldsymbol{X}_1}(r,\theta) = \lambda \exp(-\lambda \pi r^2).
\end{equation}
\label{lem:1}
\end{lem}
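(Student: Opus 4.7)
The plan is to derive $f_{\boldsymbol{X}_1}(r,\theta)$ directly from the two ingredients defining the waypoint selection: the Rayleigh distributed transition length and the uniformly distributed direction. Since $\boldsymbol{X}_0$ is placed at the origin, we have $R_1 = L$ and $\Theta_1 = \Theta$, where $L$ has cdf $P(L\leq l) = 1-\exp(-\lambda\pi l^2)$ and $\Theta$ is uniform on $[0,2\pi]$, with $L$ and $\Theta$ independent by construction of the model.

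First I would compute, for a small radial increment $\Delta r$, the probability $P(R_1 \in [r,r+\Delta r])$ using the Rayleigh cdf and identify the density with respect to the planar area element $\mathrm{d}A = r\,\mathrm{d}r\,\mathrm{d}\theta$ rather than with respect to $\mathrm{d}r\,\mathrm{d}\theta$. Concretely, I would divide the probability of hitting the annular strip $\{r \leq R_1 \leq r+\Delta r\}\times\{\theta \leq \Theta_1 \leq \theta+\Delta\theta\}$ by its area $r\,\Delta r\,\Delta\theta$ and let $\Delta r, \Delta\theta \to 0$. The independence of $L$ and $\Theta$ splits the joint probability into a product; the uniform density $1/(2\pi)$ combines with the angular arc length $\Delta\theta$ to cancel the $2\pi$ emerging from differentiating $1-\exp(-\lambda\pi r^2)$, and the factor of $r$ in the arc-area cancels the $r$ in the Rayleigh density, leaving exactly $\lambda \exp(-\lambda\pi r^2)$.

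As a sanity check and alternative derivation, I would invoke the PPP interpretation already noted in Section~\ref{sec:proposed}: the next waypoint is the nearest point of an independent homogeneous PPP $\Phi(n)$ of intensity $\lambda$ around $\boldsymbol{X}_0$. By a standard void-probability argument, the likelihood that $\Phi(n)$ has a point in an infinitesimal area element at $(r,\theta)$ and no point in the open disk of radius $r$ around the origin is $\lambda\cdot \exp(-\lambda\pi r^2)\,\mathrm{d}A$, which matches the claim and confirms the area-density normalization.

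The only real subtlety — and the step I would be careful about in writing the proof up — is keeping straight which reference measure the density is taken with respect to. The expression $\lambda \exp(-\lambda\pi r^2)$ is a density with respect to Lebesgue measure on $\mathbb{R}^2$ (area element $r\,\mathrm{d}r\,\mathrm{d}\theta$), not with respect to $\mathrm{d}r\,\mathrm{d}\theta$, which is why the factor of $r$ appearing in the differentiation of the Rayleigh cdf disappears. Once this bookkeeping is stated explicitly, everything else is a short limit computation.
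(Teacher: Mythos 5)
Your main derivation is essentially the paper's own proof: both compute the probability of landing in an infinitesimal annular element, normalize by its planar area $r\,\Delta r\,\Delta\theta$ (the paper integrates out $\theta$ and divides by $2\pi r\,\Delta r$, which is the same computation by isotropy), and take the limit, with the explicit care about the reference measure being exactly the point the paper's normalization handles. The PPP void-probability check you add is a valid and welcome corroboration, consistent with the nearest-point interpretation the paper gives in Section~\ref{sec:proposed}, but it is not needed for the result.
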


We next derive the pdf of the spatial node distribution. The main technique of the following proof is inspired by \cite{bettstetter2003node}, which is also adopted in \cite{hyytia2006spatial}. Consider a small set $dA$ located at $(r,\theta)$. Let $\vec{L}_1$  denote the vector $\boldsymbol{X}_1 - \boldsymbol{X}_0$ and $|\vec{L}_1| = L_1$. Then the pdf of the spatial node distribution can be interpreted as the ratio of the expected proportion of transition time in the set ${\rm d}A$ to the area $|{\rm d}A|$, i.e.,
\begin{align}
f(r, \theta) = \frac{ E [| \vec{L}_1 \cap {\rm d}A |  / V] }{ E[L_1/ V  ] |{\rm d}A| } = \frac{ E [| \vec{L}_1 \cap {\rm d}A |  ] }{ E[L_1  ] |{\rm d}A| },
\label{eq:apro63}
\end{align}
where the second equality follows from the independence of $V$ and the waypoints. Note that 
\begin{align}
&\frac{ E [| \vec{L}_1 \cap dA |  ] }{|dA|} = \frac{ \int_{r}^{\infty} f_{\boldsymbol{X}_1}(x,\theta)  x  {\rm d}x {\rm d} \theta \cdot \Delta l }{ r \cdot {\rm d} \theta \cdot \Delta l }  \notag \\
&=  \frac{ \int_{r}^{\infty} \lambda \exp(-\lambda \pi x^2)  x  {\rm d}x } {r} =\frac{1}{2 \pi r} \exp(-\lambda \pi r^2),
\label{eq:apro64}
\end{align}
where $\Delta l$ denotes the length of the small intersection if $\vec{L}_1$ intersects ${\rm d}A$, and we apply Lemma \ref{lem:1} in the second equality in (\ref{eq:apro64}). The first equality in (\ref{eq:apro64}) can be explained through Fig. \ref{fig:1} as follows. The intersection of $| \vec{L}_1 \cap {\rm d}A |$ is $\Delta l$ if $\boldsymbol{X}_1$ is in the shaded area and $0$ otherwise. Noting  that the probability of the event that $\boldsymbol{X}_1$ is in the shaded area is $\int_{r}^{\infty} f_{\boldsymbol{X}_1}(x,\theta)  x  {\rm d}x  {\rm d}\theta$, we have
\begin{align}
E [| \vec{L}_1 \cap dA |  ] =  \Delta l \cdot \int_{r}^{\infty} f_{\boldsymbol{X}_1}(x,\theta)  x  {\rm d}x  {\rm d}\theta.
\end{align}
This and $|dA| = r \cdot d \theta \cdot \Delta l$ give the desired equality.  So
\begin{align}
f(r, \theta) = \frac{\exp(-\lambda \pi r^2)}{ 2 \pi r E[L_1]  } = \frac{\sqrt{\lambda} \exp(-\lambda \pi r^2)}{ \pi r  },
\label{eq:apro65} 
\end{align}
where we use the result that $E[L_1] = 1/2\sqrt{\lambda}$ in the last equality. This completes the proof.

\subsection{Proof of Proposition \ref{pro:7}}
\label{proof:lem2}

From Lemma \ref{lem:1}, $f_{\boldsymbol{X}_1}(r,\theta) = \lambda \exp(-\lambda \pi r^2)$. We compute $E[N]_{\textrm{app}}$ as follows.
\begin{align}
&E[N]_{\textrm{app}} = \sum_{n = 1}^{\infty} n \int_{0}^{2\pi} \int_{(2n-1) R}^{(2n+1)R} \lambda \exp(-\lambda \pi r^2) r  {\rm d}r  {\rm d}\theta \notag \\
&= \sum_{n = 1}^{\infty} n \left( \exp ( - \pi (2n-1)^2  R^2 \lambda) - \exp ( - \pi (2n+1)^2  R^2 \lambda) \right) \notag \\
&= \sum_{n = 0}^{\infty} \exp ( - \pi (2n+1)^2  R^2 \lambda)   \notag 
\end{align}
Substituting $R = \displaystyle \sqrt { \frac{C}{\pi} } =\displaystyle \frac{\sqrt[4]{27} d }{\sqrt{2 \pi}}$, we obtain
\begin{align}
&E[N]_{\textrm{app}} = \sum_{n = 0}^{\infty} \exp \left( - \frac{3\sqrt{3}}{2} (2n+1)^2  d^2 \lambda \right). \notag
\label{eq:30}
\end{align}
Then the lower bound $N^L_{\textrm{approx}}$ is derived as follows.
\begin{align}
E[N]_{\textrm{app}} 
&\geq \int_{0}^{\infty} \exp \left( - \frac{3\sqrt{3}}{2} (2x+1)^2  d^2 \lambda \right)  {\rm d}x  \notag \\
&= \sqrt{\frac{\pi}{6\sqrt{3} \lambda d^2}} Q \left( \sqrt{3 \sqrt{3} \lambda d^2 } \right) \triangleq E[N]^L_{\textrm{app}}. \notag
\end{align}
The upper bound $E[N]^U_{\textrm{app}}$ can be derived in a similar manner. For the remaining proof, denote $t = \sqrt{3\sqrt{3}\lambda d^2}$. Note that $t > 0$. Then the difference $\triangle N_{\textrm{app}} =  E[N]^U_{\textrm{app}} - E[N]^L_{\textrm{app}}$ can be compactly written as 
$ \sqrt{\frac{\pi}{2}} \frac{1}{t} ( 1- 2 Q(t) ) $ which we denote by $g(t)$. We claim $g(t)$ is strictly decreasing when $t > 0$. To see this, define
$
h(t) = \sqrt{\frac{2}{\pi}} t \exp( - \frac{t^2}{2})  + 2 Q(t) - 1
$
whose derivative is ``$- \sqrt{\frac{2}{\pi}} t^2 \exp( - \frac{t^2}{2}) $'' which is strictly negative when $t > 0$, implying $h(t)$ is strictly decreasing when $t > 0$. Meanwhile, $h(0) = 0$. So $h(t) < 0$ when $t > 0$. Now it is clear that the derivative of $g(t)$ given by $\displaystyle g^{(1)} (t) = \sqrt{\frac{\pi}{2}} \frac{h(t)}{t^2}$ is strictly negative when $t > 0$.  Thus, $g(t)$ is strictly decreasing when $t > 0$. Besides, 
\begin{align}
&\lim_{t \to 0} \sqrt{\frac{\pi}{2}} \frac{1}{t} ( 1- 2 Q(t) ) = \lim_{t \to 0} \exp(- \frac{t^2}{2}) = 1, \notag \\
&\lim_{t \to \infty} \sqrt{\frac{\pi}{2}} \frac{1}{t} ( 1- 2 Q(t) ) = 0. \notag
\end{align}
The desired results follow by further observing $t$ is a strictly increasing continuous function of $\lambda d^2$.

\subsection{Proof of Proposition \ref{pro:77}}
\label{proof:pro77}

We use (generalized) argument for Buffon's needle problem in this proof. Consider the typical node located in area $\mathcal{A}$ of size $|\mathcal{A}|$ and cell boundaries $\mathcal{B}_A$ of length $|\mathcal{B}_A|$. Then the probability that this node crosses the small boundary $\Delta b$ within a small time interval $\Delta t$ is 
$
\Delta p =  \frac{1}{|\mathcal{A}|}  \cdot \Delta b \cdot \nu \Delta t E [|\sin \Theta|] 
$,
where $\Theta$ is uniformly distributed on $[0,2\pi]$ (following from our mobility model) and thus $E [|\sin \Theta|] = \frac{2}{\pi}$. So the probability that this node crosses $\mathcal{B}_A$ is
$
p ( \mathcal{A} ) = \Delta p \cdot \frac{ |\mathcal{B}| }{\Delta b} = \frac{2}{\pi} \frac{|\mathcal{B}_A|}{|\mathcal{A}|}  \cdot   \nu \Delta t 
$. Thus, conditioned on moving, the handover rate 
\begin{align}
H^\star &= \lim_{ |\mathcal{A}| \to \infty }  \frac{ p ( \mathcal{A} ) }{ \Delta t } = \frac{2}{\pi}   \nu  \cdot   \lim_{ |\mathcal{A}| \to \infty } \frac{|\mathcal{B}_A|}{|\mathcal{A}|} \notag \\
&= \frac{2}{\pi}   \nu  \cdot \frac{ 9 d }{ 9 \sqrt{3}  d^2 /2  } = \frac{4 \sqrt{3}}{3\pi } \frac{\nu}{d}. \notag
\end{align}
Correspondingly, $ \displaystyle E[N] = E[H^\star] \cdot E[T] = \frac{4 \sqrt{3}}{3\pi } \frac{E[V]}{d} \cdot E[T] $ and $\displaystyle H = \frac{E[T]}{E[T]+E[S]} \frac{4 \sqrt{3}}{3\pi } \frac{E[V]}{d}$.

\subsection{Proof of Proposition \ref{pro:11}}
\label{proof:pro11}

\begin{figure}
\centering
\includegraphics[width=6cm]{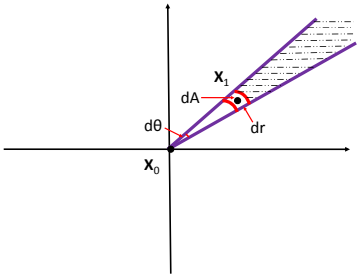}
\caption{A geometric illustration for the proof of Prop. \ref{pro:6}}
\label{fig:1}
\end{figure}

Applying the general expression (\ref{eq:39}) for $S_T$ yields
\begin{align}
 S_T=& \frac{4}{\pi \nu} f(\xi_x d,\xi_y d) \cdot \notag \\
 & \int_{0}^{d/2} \int_{0}^{\sqrt{3}d/2}  {\rm d}y  {\rm d}x + \int_{d/2}^{d} \int_{0}^{-\sqrt{3}x + \sqrt{3}d }   {\rm d}y  {\rm d}x,   \notag 
\end{align}
which equals
\begin{align}
= & \frac{3\sqrt{3}}{2 \pi} \frac{ \exp(- \lambda \pi (\xi_x^2 d^2 + \xi_x^2 d^2)) }{ \sqrt{\xi_x^2 + \xi_x^2}} \frac{d}{\nu},
\end{align}
where $(\xi_x,\xi_y)$ are a pair of constants in the region $A:= \{ (x,y):  x \in [0,1], y \in [0,\frac{\sqrt{3}}{2}] \textrm{ if } x \in [0,\frac{1}{2}) \textrm{ and } y \in [0,\sqrt{3} (1-\xi_x) ] \textrm{ if } x \in [\frac{1}{2},1] \} $, and  in the last equality we use $E[T] = \frac{1}{2 \nu \sqrt{\lambda}}$ derived for constant velocity case in Section \ref{sec:stochastic}, and mean value theorem for integrals. Then, 
$
\lim_{\lambda \to 0}  S_T   = \frac{3\sqrt{3}}{2 \pi} \frac{1 }{ \sqrt{\xi_x^2 + \xi_x^2}} \frac{d}{\nu}.
$
Next we give bounds on the constant $\alpha := \frac{3\sqrt{3}}{2 \pi} \frac{1 }{ \sqrt{\xi_x^2 + \xi_x^2}} $. Using the bounds given in Prop. \ref{pro:10},
\begin{align}
\lim_{\lambda \to 0} S^L_T &= \lim_{\lambda \to 0} \frac{1- 2Q \left(\sqrt{\frac{3}{2}\pi \lambda} d \right)}{2\nu \sqrt{\lambda}} \notag \\
&= \lim_{\lambda \to 0} \frac{\sqrt{3}}{2} \frac{d}{\nu} \exp(- \frac{3}{4}\pi d^2 \lambda) = \frac{\sqrt{3}}{2} \frac{d}{\nu} .
\label{eq:401}
\end{align}
Similarly, $\lim_{\lambda \to 0} S^U_T = {d}/{\nu} $. Since these bounds are strict, we conclude that $\alpha \in (\frac{\sqrt{3}}{2},1)$.

\subsection{Proof of Proposition \ref{pro:12}}
\label{proof:pro12}

This proposition can be proved by following the proof of Prop. \ref{pro:77}. Nevertheless, we provide an alternative proof here. To this end, we first introduce some terminologies for ease of exposition.
Consider the set $\boldsymbol{x}^{(k)}$ consisting of arbitrary $k$ distinct points in $\Phi$ on $\mathbb{R}^2$. Without loss of generality, we assume $\boldsymbol{x}^{(k)} = \{\boldsymbol{x}_1,...,\boldsymbol{x}_k \}$. Then if the intersection 
$
\mathcal{F} ( \boldsymbol{x}^{(k)} | \Phi ) = \cap_{i=1}^k \mathcal{C}_{\boldsymbol{x}_i} (\Phi) \neq \emptyset,
$
then $\mathcal{F} ( \boldsymbol{x}^{(k)} | \Phi )$ is called a ($3-k$)-facet. Now let $\Phi_k$ be the set of all configurations $\boldsymbol{x}^{(3-k)} \subseteq \Phi$. That is, the intersection of the Voronoi cells associated with any configuration $\boldsymbol{x}^{(3-k)} \in \Phi_k$ is a $k$-facet.  For each configuration $\boldsymbol{x}^{(3-k)} = \{\boldsymbol{x}_1,...,\boldsymbol{x}_{3-k} \} \in \Phi_k$, we associate a point $\boldsymbol{z}_{  \boldsymbol{x}^{(3-k)} } (\Phi)$ called ``centroid'' such that for all $\boldsymbol{y} \in \mathbb{R}^2$,
$
\boldsymbol{z} (  \boldsymbol{x}^{(3-k)} + \boldsymbol{y} | \Phi + \boldsymbol{y}) = \boldsymbol{y} + \boldsymbol{z} (  \boldsymbol{x}^{(3-k)} | \Phi  )
$.
Note that there are several degrees of freedom in choosing the centroids. We refer to \cite{moller1994lectures} for more details.  

Consider the intersection of the Poisson-Voronoi tessellation with a fixed line $L$. Without loss of generality, we assume that $L$ contains the origin. Then the nonempty sectional cells 
$
\mathcal{\bar{C}}_{\boldsymbol{x}_i} (\Phi) = \mathcal{C}_{\boldsymbol{x}_i} (\Phi) \cap L
$
satisfy that (i) $L = \cup_{\boldsymbol{x}_i \in \Phi} \mathcal{\bar{C}}_{\boldsymbol{x}_i} (\Phi)$, and (ii) $ ri ( \mathcal{\bar{C}}_{\boldsymbol{x}_i} (\Phi) ) \cap ri ( \mathcal{\bar{C}}_{\boldsymbol{x}_j} (\Phi) ) = \emptyset, \forall i\neq j$ where $ri(\cdot)$ denotes the relative interior \cite{moller1994lectures}. So the sectional cells  $\mathcal{\bar{C}}_{\boldsymbol{x}_i} (\Phi) $ constitute a tessellation of $L$, denoted by $\mathcal{V}_L (\Phi)$. Now consider the intersection $\mathcal{F}_L ( \boldsymbol{x}^{(2)} | \Phi ) = \mathcal{F} ( \boldsymbol{x}^{(2)} | \Phi ) \cap L $ which can be either empty or a singleton. Clearly, $\mathcal{F}_L ( \boldsymbol{x}^{(2)} | \Phi )$ is the $0$-facet of tessellation $\mathcal{V}_L (\Phi)$ if $\mathcal{F}_L ( \boldsymbol{x}^{(2)} | \Phi )$ is nonempty. Now let $\Phi^L_{0} = \{ \boldsymbol{x}^{(2)} =\{\boldsymbol{x}_1, \boldsymbol{x}_2 \} \in \Phi_1: \mathcal{F}_L ( \boldsymbol{x}^{(2)} | \Phi ) \neq \emptyset \} $ which parametrizes the $0$-facets of $\mathcal{V}_L (\Phi)$. Then the intensity of the $0$-facets of $\mathcal{V}_L (\Phi)$ is well defined as
\begin{align}
\mu^L_0 = \frac{E[ \sum_{ \boldsymbol{x}^{(2)} \in \Phi^L_{0} } \chi \{ \boldsymbol{z}_L(\boldsymbol{x}^{(2)} | \Phi) \in \mathcal{B}_L  \}  ]}{ |\mathcal{B}_L|_1 },
\label{eq:48}
\end{align}
where $\chi\{\cdot \}$ is indicator function taking value $1$ if the event in its argument is true and $0$ otherwise, $\boldsymbol{z}_L(\boldsymbol{x}^{(2)} | \Phi)$ is the centroid of the $0$-facets of $\mathcal{V}_L (\Phi)$, $\mathcal{B}_L \subseteq L $ is any arbitrary Borel set and $|\mathcal{B}_L|_1$ is the volume of $\mathcal{B}_L$ with respect to $\mathbb{R}$. 

Without loss of generality, assume that $\boldsymbol{X}_0$ is at the origin. Let $L:=L(\boldsymbol{X}_0, \boldsymbol{X}_1)$ be the line containing $\boldsymbol{X}_0$ and $\boldsymbol{X}_1$, and $\mathcal{B}_L $ the interval $[\boldsymbol{X}_0, \boldsymbol{X}_1]$. Then conditioned on the next waypoint $\boldsymbol{X}_1 = (r,\theta)$, the expected number of handovers can be computed as
\begin{align}
E[N| \boldsymbol{X}_1 = (r,\theta)  ] &= E[ \sum_{ \boldsymbol{x} \in \cup_{\boldsymbol{x}_i \in \Phi} \partial \mathcal{C}_{\boldsymbol{x}_i} (\Phi)  \cap L } \chi \{ \boldsymbol{x} \in \mathcal{B}_L  \}  ] \notag \\
&= E[ \sum_{ \boldsymbol{x}^{(2)} \in \Phi^L_{0} } \chi \{ \boldsymbol{z}_L(\boldsymbol{x}^{(2)} | \Phi) \in \mathcal{B}_L  \}  ]  \label{eq:46} \\
&= \mu^L_0 |\boldsymbol{X}_0 - \boldsymbol{X}_1|_1 = \frac{4 r \sqrt{\mu} }{\pi}, \label{eq:47}
\end{align}
where $\partial \mathcal{C}_{\boldsymbol{x}_i} (\Phi) $ in (\ref{eq:46}) denotes the boundary of the cell $\mathcal{C}_{\boldsymbol{x}_i} (\Phi)$. The equality in (\ref{eq:46}) follows by choosing the centroids $\boldsymbol{z}_L(\boldsymbol{x}^{(2)}$ as follows: for any $\boldsymbol{x}^{(2)} \in \Phi^L_{0}$, choose the singleton $\mathcal{F} ( \boldsymbol{x}^{(2)} | \Phi ) \cap L$ as the centroid $\boldsymbol{z}_L(\boldsymbol{x}^{(2)} | \Phi)$ of the $0$-facets of $\mathcal{V}_L (\Phi)$. The equality in (\ref{eq:47}) follows from (\ref{eq:48}). The last equality follows since $ \mu^L_0 = {4 \sqrt{\mu} }/{\pi} $ for Poisson-Voronoi tessellation with intensity $\mu$ \cite{moller1994lectures}. Then handover rate can be computed as follows:
\begin{align}
H &= \frac{E[N]}{E[T]} = \frac{ E[ E[N | \boldsymbol{X}_1 = (r,\theta)  ] ]}{E[T]}  \notag \\
&= \frac{ 1 }{E[T]} \int_{0}^{2\pi} \int_0^{\infty} E[N | \boldsymbol{X}_1 = (r,\theta)  ] f_{\boldsymbol{X}_1}(r,\theta) r  {\rm d} r  {\rm d} \theta  \notag \\
&= \frac{ 1 }{E[T]} \int_{0}^{2\pi} \int_0^{\infty} \frac{4 r \sqrt{\mu} }{\pi}   \lambda \exp(-\lambda \pi r^2) r  {\rm d} r  {\rm d} \theta \notag \\
&= \frac{ 1 }{E[T]} \frac{4 \Gamma (\frac{3}{2})}{\pi^{\frac{3}{2}}} \sqrt{ \frac{\mu}{\lambda} } \label{eq:50} \\
&= \frac{ 1 }{E[T]} \frac{2}{\pi} \sqrt{ \frac{\mu}{\lambda} }.   \label{eq:51}
\end{align}
where in (\ref{eq:50}) we use Lemma \ref{lem:1}, and in (\ref{eq:51}) we apply the result that $
\int_{0}^{\infty} r^{\alpha - 1 } \exp(-\gamma \pi r^\beta) \ {\rm d} r = \frac{\Gamma (\alpha / \beta)}{\beta \gamma^{\alpha / \beta} } $ for $\alpha, \beta, \gamma > 0$. Plugging $E[T]$ yields the desired results.

\bibliographystyle{IEEEtran}
\bibliography{IEEEabrv,Reference}

\end{document}